\documentclass[runningheads]{llncs}
\usepackage{graphicx}
\usepackage{lipsum}

\usepackage{graphicx}%
\usepackage{multirow}%
\usepackage{amsmath,amssymb,amsfonts}%
\usepackage{mathrsfs}%
\usepackage[title]{appendix}%
\usepackage{xcolor}%
\usepackage{textcomp}%
\usepackage{manyfoot}%
\usepackage{booktabs}%
\usepackage{algorithm}%
\usepackage{algorithmicx}%
\usepackage{algpseudocode}%
\usepackage{listings}%
\usepackage{adjustbox}

\usepackage{cite}

\let\proglang=\textsf

\raggedbottom
\usepackage{graphicx}
\usepackage{float}
\graphicspath{ {images/} }
\usepackage{multirow}
\usepackage{subcaption}

\usepackage{listings}

\newcommand{\X}{\mathrm{X}}
\newcommand{\Y}{\mathrm{Y}}

\newcommand{\dcor}{\mathcal{R}}
\newcommand{\R}{\mathbb{R}}
\newcommand{\dcov}{\mathcal{V}}
%

\begin{document}
	\title{Improved distance correlation estimation}
	%
	%
	\author{Blanca E. Monroy-Castillo\inst{1} \and
		M.A. Jácome \inst{2} \and
		Ricardo Cao \inst{1}}
	\authorrunning{Monroy-Castillo et al.}
	%
	\institute{Research group MODES, CITIC, Department of Mathematics, Universidade da Coru\~na, A Coru\~na, Spain\\
		\email{b.mcastillo@udc.es}\\
		\and
		Research group MODES, CITIC, Department of Mathematics, Faculty of Sciences, Universidade da Coru\~na, A Coru\~na, Spain\\
		}
	\maketitle              
	\begin{abstract}
		Distance correlation is a novel class of multivariate dependence measure, taking positive values between 0 and 1, and applicable to random vectors of arbitrary dimensions, not necessarily equal. It offers several advantages over the well-known Pearson correlation coefficient, the most important is that distance correlation equals zero if and only if the random vectors are independent. 
		There are two different estimators of the distance correlation available in the literature. The first one, proposed by \cite{szekely_measuring_2007}, is based on an asymptotically unbiased estimator of the distance covariance which turns out to be a V-statistic. The second one builds on an unbiased estimator of the distance covariance proposed in \cite{szekely_partial_2014}, proved to be an U-statistic by \cite{huo_fast_2016}. This study evaluates their efficiency (mean squared error) and compares computational times for both methods under different dependence structures.	Under conditions of independence or near-independence, the V-estimates are biased, while the U-estimator frequently cannot be computed due to negative values. To address this challenge, a convex linear combination of the former estimators is proposed and studied, yielding good results regardless of the level of dependence.

		\keywords{Distance correlation \and U-statistic\and V-statistic \and simulation study.}
	\end{abstract}

	\section{Introduction}\label{Sec:Introduction}
	
	The concept of dependence among random observations plays a central role in many fields, including statistics, medicine, biology and engineering, among others. Given the inherent complexity of fully understanding dependencies, the strength of these relationships is often distilled into a single metric, the correlation coefficient. Numerous types of correlation coefficients exist, but perhaps the most widely known is Pearson correlation coefficient. Additionally, alternative measures of correlation exist. For instance, rank correlation assesses the relationship between the rankings of two variables or the rankings of the same variable across different conditions. Examples of rank correlation coefficients include Spearman's rank correlation coefficient, Kendall tau correlation coefficient, and Goodman and Kruskal's gamma.\\
	
	Pearson correlation coefficient presents some disadvantages. Firstly, the two variables must be normally distributed. Secondly, the fact that the Pearson correlation coefficient is zero does not determine independence between two variables, as only a linear dependence between the variables can be determined and the variables may have a nonlinear relationship. In recent years, a novel measure of dependence between random vectors has been proposed, \textit{distance correlation}, was introduced by \cite{szekely_measuring_2007}. They point out that distance covariance and distance correlation share a parallel with product-moment covariance and correlation. However, unlike the classical definition of correlation, distance correlation is zero solely when the random vectors are independent. Besides, the distance correlation can be used to evaluate both linear and nonlinear correlations between variables. Essentially, for all distributions with finite first moments, distance correlation ($\dcor$) extends the notion of correlation in two essential ways:
	
	\begin{enumerate}
		\item $\dcor(X,Y)$ is defined for $X$ and $Y$ in arbitrary dimensions;
		\item $\dcor(X,Y) = 0$ characterizes independence of $X$ and $Y$.  
	\end{enumerate}
	Distance correlation satisfies $0 \leq \dcor \leq 1$, and $\dcor = 0$ if and only if $X$ and $Y$ are independent. \\
	
	Distance correlation has been applied and extended to a great variety of fields, such as variable selection \cite{yenigun_variable_2015, febrero-bande_variable_2019, yang_sufficient_2019, wu_mm_2021}, as well as in disciplines like biology \cite{SunHerazoMayaHuangKaminskiZhao+2018, brankovic_distributed_2018} and medicine \cite{genes9120608, Wenxing}, among others. Moreover, it has been explored within high-dimensional contexts \cite{yao_testing_2018, lu_conditional_2021}. Furthermore, the applicability of distance correlation has been expanded to address the challenge of testing independence of high-dimensional random vectors in \cite{szekely_distance_2013}. In a similar way, the concept of partial distance correlation was introduced by \cite{szekely_partial_2014}.
	Distance correlation has undergone an extension to encompass conditional distance correlation \cite{wang_conditional_2015, lu_model-free_2020, lu_conditional_2021, cui_model-free_2022}, and it has also been examined in the realm of survival data \cite{chen_feature_2021, chen_robust_2018, chen_ultrahigh-dimensional_2022, edelmann_consistent_2022, zhang_model-free_2021}.
	Lastly, \cite{edelmann_relationships_2021} proves that for any fixed Pearson correlation coefficient strictly between -1 and 1, the distance correlation coefficient can attain any value within the open unit interval (0,1).\\
	
	The estimation of the distance correlation relies on the estimation of the distance covariance. \cite{szekely_uniqueness_2012} demonstrated the uniqueness of distance covariance. \cite{szekely_measuring_2007} proposed a sample distance covariance estimator, and they proved that it turns out to be a V-statistic. This estimator of the distance covariance leads to the so-called V-estimator of the distance correlation. Moreover, intermediate results in \cite{szekely_partial_2014} led to an unbiased estimator of the squared distance covariance. This unbiased estimator was further identified as a U-statistic in \cite{huo_fast_2016}. This distance covariance estimator results in the U-estimator of the distance correlation. \\
	
	The two estimators of the distance correlation, U-estimator and V-estimator, offer different properties. The direct implementation of the V-estimator results in computational complexity that scales as $\mathcal{O}(n^2)$. The computation of the U-estimator reduces computational complexity to $\mathcal{O}(n \log n)$. But both the U-statistic and the V-statistic of distance correlation can be calculated in $\mathcal{O}(n \log n)$ steps, the algorithm in \cite{huo_fast_2016} can be straightforward extended to the V-statistic version. Both estimators exhibit good asymptotic properties; however, the U-estimator allows for easier derivation of these properties.\\
	
	One scenario of interest when working with distance correlation is that of independence, when both distance correlation and distance covariance are zero. It is under independence or situations close to it where the two estimators exhibit the most significant differences. Firstly, the U-estimator of the squared distance covariance is unbiased, leading to occurrences of negative values of the squared distance covariance, which precludes the computation of the U-estimator of the distance correlation. Conversely, the V-estimator biased as it can only take positive values. However, to the best of our knowledge, the respective advantages and disadvantages of each distance correlation estimator, and the use of one or the other estimator, do not seem to be based on any specific basis. For example, while \cite{edelmann_consistent_2022} employ the U-estimator to propose an extension for right-censored data, \cite{wang_conditional_2015} suggest an extension for conditional distance correlation using both estimators.\\
	
	In this work, a simulation study to assess the practical behavior and efficiency of each estimator with different dependency models is conducted. The experimental results show that neither is consistently better, making the choice of the estimator in practice a challenging task. To tackle this inconvenience a new approach, a convex linear combination of the former estimators, is introduced and studied. \\
	
	The remainder of this paper is organized as follows. Section \ref{sec:preliminaries} introduces the preliminaries, defining distance covariance and distance correlation. Additionally, it presents the estimators of the distance covariance and distance correlation in the literature. Some of the existing packages developed in R and Python software are introduced. Upon studying the occurrence of negative values for the squared distance covariance estimator which makes it impossible to calculate the U-estimator of the distance correlation, two modifications of the U-estimator are proposed to handle that problem.  Section \ref{sec:convex} introduces a convex linear combination approach to address the estimator choice problem. Section \ref{sec:simulation} shows the results of the simulation study through three models: Farlie-Gumbel-Morgenstern (FGM), a bivariate normal, and a nonlinear model. The results are compared and shown in terms of efficiency, mean squared error (MSE), bias, variance, and computational time for each estimator, moreover,  a comprehensive comparison between the original estimators and the proposed alternatives are offered. Finally, Section \ref{sec:Conclusions} provides the concluding remarks. 
	
	\section{Distance correlation estimation} \label{sec:preliminaries}
	
	Let $X \in \R^p$ and $Y \in \R^q$ be random vectors, where $p$ and $q$ are positive integers. The characteristic function of $X$ and $Y$ are denoted $\phi_X$ and $\phi_Y$, respectively, and the joint characteristic function of $X$ and $Y$ is denoted $\phi_{X,Y}$. \\
	\begin{definition}
		The distance covariance between random vectors $X$ and $Y$ with finite first moments is the nonnegative square root of the number $\dcov^2(X, Y)$, defined by:
		\begin{eqnarray*}
			\dcov^2(X,Y) &=& \lVert \phi_{X,Y}(t,s)-\phi_X(t)\phi_Y(s) \rVert^2 \\
			&=& \frac{1}{c_pc_q}\int_{\R^{p+q}} \frac{\lvert \phi_{X,Y}(t,s)-\phi_X(t)\phi_Y(s)\rvert^2}{|t|^{1+p}_p|s|^{1+q}_q}dt\,ds,
		\end{eqnarray*}
		where $c_d = \frac{\pi^{(1+d)/2}}{\Gamma((1+d)/2)}$. Similarly, the distance variance is defined as the square root of 
		\begin{eqnarray*}
			\dcov^2(X) = \dcov^2(X,X) = \lVert \phi_{X,X}(t,s)-\phi_X(t)\phi_X(s) \rVert^2.
		\end{eqnarray*}
	\end{definition}
	
	\begin{definition}
		The distance correlation between two random vectors $X$ and $Y$ with finite first moments is the positive square root of the nonnegative number $\dcor^2(X,Y)$ defined by 
		\begin{eqnarray}\label{eq:dcor}
			\dcor^2(X,Y) &=& \begin{cases}
				\frac{\dcov^2(X,Y)}{\sqrt{\dcov^2(X)\dcov^2(Y)}}, & \dcov^2(X)\dcov^2(Y) > 0 \\
				0, & \dcov^2(X)\dcov^2(Y) = 0.
			\end{cases}
		\end{eqnarray}
	\end{definition}
	
	On the other hand, it is showed an equivalent form to compute the distance covariance through expectations given by \cite{szekely_measuring_2007}, this is, if $E|X|^2_p < \infty$ and $E|Y|^2_q < \infty$, then $E[|X|_p|Y|_q] < \infty$, and
	\begin{eqnarray}\label{eq:expectations}
		\dcov^2(X,Y) &=& E[|X_1-X_2|_p|Y_1-Y_2|_q] + E[|X_1-X_2|_p]E[|Y_1-Y_2|_q] \nonumber \\ & & \, - 2E[|X_1-X_2|_p|Y_1-Y_3|_q],
	\end{eqnarray}
	where $(X_1,Y_1),(X_2,Y_2)$ and $(X_3,Y_3)$ are independent and identically distributed as $(X,Y)$. Note that with Equation (\ref{eq:expectations}), it is possible to compute the distance covariance using only the density function, without needing to know the characteristic function. In the same way, it is possible to compute the distance variance of $X$ as the square root of 
	\begin{eqnarray}\label{eq:expectationsvar}
		\dcov^2(X,X)  \,\,\,=\,\,\,  \dcov^2(X) &=& E\left[|X_1-X_2|_p^2\right] + E\left[|X_1-X_2|_p\right]^2  \nonumber \\ & & \, - 2E\left[|X_1-X_2|_p|X_1-X_3|_p\right],
	\end{eqnarray}
	similarly for $\dcov^2$($Y$).
	As a result, it becomes possible to accurately determine the exact distance correlation $\dcor(X,Y)$ using (\ref{eq:expectations}), (\ref{eq:expectationsvar}) and (\ref{eq:dcor}).\\ 
	
	For an observed random sample $(\boldsymbol{\X,\Y}) = \{(X_k,Y_k): k = 1, \dots,n\}$ from the joint distribution of random vectors $X \in \R^p$ and $Y \in \R^q$, \cite{szekely_measuring_2007} proposed the following empirical estimator of the distance covariance. \\
	\begin{definition}
		The empirical distance covariance $\dcov_n(\boldsymbol{\X}, \boldsymbol{\Y})$ is defined by the nonnegative square root of:
		\begin{eqnarray}\label{eq:V-statistic}
			\dcov_n^2(\boldsymbol{\X}, \boldsymbol{\Y}) &=& \frac{1}{n^2} \sum_{k,l = 1}^n A_{kl}B_{kl},
		\end{eqnarray}
		where $A_{kl}$ and $B_{kl}$ denote the corresponding double-centered distance matrices defined as 
		\begin{eqnarray*}
			A_{kl} &=& \begin{cases}
				a_{kl} - \frac{1}{n} \sum_{j = 1}^n a_{kj} - \frac{1}{n} \sum_{i = 1}^n a_{il} + \frac{1}{n^2} \sum_{i,j=1}^n a_{ij}, & k \neq l \\
				0, & k = l,
			\end{cases}
		\end{eqnarray*}
		with $a_{kl} = |X_k - X_l|$ are the pairwise distances of the $X$ observations. The terms $B_{kl}$ are defined similarly but using $b_{kl} = |Y_k-Y_l|$ instead of $a_{kl}$.  In the same way the sample distance variance $\dcov_n(\boldsymbol{\X})$ is defined as the square root of: 
		\begin{eqnarray*}
			\dcov_n^2(\boldsymbol{\X}) &=& \dcov_n^2(\boldsymbol{\X,\X}) \,\,\, = \,\,\, \frac{1}{n^2}\sum_{k,l = 1}^n A_{kl}^2.
		\end{eqnarray*}
	\end{definition}
	Theorem 1 in \cite{szekely_measuring_2007} proved that $\dcov_n^2(\boldsymbol{\X,\Y}) \geq 0$. Moreover, it is proved that under independence, $\dcov_n^2(\boldsymbol{\X,\Y})$ is a degenerate kernel V-statistic. \\
	
	The first estimator of the distance correlation in \cite{szekely_measuring_2007}, dCorV, is based on the empirical distance variance and covariance, resulting the empirical distance correlation. \\
	\begin{definition}
		The empirical distance correlation $\dcor_n(\boldsymbol{\X}, \boldsymbol{\Y})$ is the square root of
		\begin{eqnarray}\label{eq:dcorV}
			\mathrm{dCorV}^2(\boldsymbol{\X,\Y}) \,\,\, = \,\,\, \dcor_{n}^2(\boldsymbol{\X}, \boldsymbol{\Y}) &=& \begin{cases}
				\frac{\dcov_n^2(\boldsymbol{\X,\Y})}{\sqrt{\dcov_n^2(\boldsymbol{\X})\dcov_n^2(\boldsymbol{\Y})}}, & \dcov_n^2(\boldsymbol{\X})\dcov_n^2(\boldsymbol{\Y}) > 0 \\ 0, & \dcov_n^2(\boldsymbol{\X})\dcov_n^2(\boldsymbol{\Y}) = 0.
			\end{cases}
		\end{eqnarray}
	\end{definition}
	
	Likewise, \cite{szekely_partial_2014} proposed an alternative estimator for the distance covariance $\dcov(X,Y)$ based on a $\mathcal{U}-$centered matrix. \\
	\begin{definition}
		Let $A = (a_{kl})$ be a symmetric, real valued $n \times n$ matrix with zero diagonal, $n > 2$. Define the $\mathcal{U}-$centered matrix $\tilde{A}$ as follows. Let the $(k,l)$th entry of $\tilde{A}$ be defined by 
		\begin{eqnarray*}
			\tilde{A}_{kl} = \begin{cases}
				a_{kl} - \frac{1}{n-2} \sum_{j=1}^n a_{kj}-\frac{1}{n-2} \sum_{i = 1}^n a_{il} + \frac{1}{(n-1)(n-2)}\sum_{i,j=1}^n a_{ij}, & k \neq l; \\ 0, & k = l.
			\end{cases}
		\end{eqnarray*}
	\end{definition}
	Here "$\mathcal{U}-$centered" is so named because the inner product, 
	\begin{eqnarray} \label{eq:U-statistic}
		\mathcal{U}_n^2(\boldsymbol{\X,\Y}) \,\,\, = \,\,\, \left(\tilde{A}\cdot \tilde{B}\right) &=& \frac{1}{n-3} \sum_{k \neq l} \tilde{A}_{kl}\tilde{B}_{kl},
	\end{eqnarray}
	defines an unbiased estimator of the squared distance covariance $\dcov^2(X,Y)$. \cite{huo_fast_2016} established that the estimator in Equation (\ref{eq:U-statistic}) can be expressed as a U-statistic. Thus, it becomes feasible to define an alternative estimator of the empirical distance correlation through $\mathcal{U}_n(\boldsymbol{\X,\Y})$ and distance variance of $\boldsymbol{\X}$ and $\boldsymbol{\Y}$, $\mathcal{U}_n(\boldsymbol{\X})$, $\mathcal{U}_n(\boldsymbol{\Y})$, respectively, in the following manner.\\
	\begin{definition}
		The estimator dCorU of the distance correlation $\dcor(X,Y)$, based on U-statistics, is the square root of
		\begin{eqnarray}\label{eq:dcorU}
			\mathrm{dCorU}^2(\boldsymbol{\X,\Y}) &=& \begin{cases}
				\frac{\mathcal{U}_n^2(\boldsymbol{\X,\Y})}{\sqrt{\mathcal{U}_n^2(\boldsymbol{\X})\mathcal{U}_n^2(\boldsymbol{\Y})}}, & \mathcal{U}_n^2(\boldsymbol{\X})\mathcal{U}_n^2(\boldsymbol{\Y}) > 0 \\ 0, & \mathcal{U}_n^2(\boldsymbol{\X})\mathcal{U}_n^2(\boldsymbol{\Y}) = 0,
			\end{cases}
		\end{eqnarray}
	\end{definition}
	This reformulation facilitates a fast algorithm for estimating the distance covariance, which can be implemented with a computational complexity of $\mathcal{O}(n \log n)$, while the original estimator (Eq. (\ref{eq:V-statistic})) has a computational complexity of $\mathcal{O}(n^2)$. Alternatively, \cite{chaudhuri_fast_2019} proposed an algorithm primarily composed of two sorting steps for computing the estimator of the distance correlation in (\ref{eq:dcorU}). This design renders it simple to implement and also results in a computational complexity of $\mathcal{O}(n \log n)$. \\
	
	These results have prompted the implementation and development of many software packages, available in R \cite{R} and Python \cite{van1995python}. A comprehensive comparison of the performance between these packages in both languages is presented by \cite{RAMOS2023101326}. An open-source Python package for distance correlation and other statistics is introduced: the \textbf{\proglang{dcor}} package \cite{dcorpython}. The studied libraries in Python are \textbf{\proglang{statsmodels}} \cite{seabold2010statsmodels}, \textbf{\proglang{hyppo}} \cite{hyppo}, and \textbf{\proglang{pingouin}} \cite{Vallat2018}. In R, the \textbf{\proglang{energy}} \cite{energy} package implements the \proglang{dcov} and \proglang{dcor} functions, which return $\dcov_n(\boldsymbol{\X,\Y})$ and dCorV($\boldsymbol{\X,\Y}$), respectively. For the U-estimator, the \proglang{dcovU} and \proglang{bcdcor} functions are implemented. These functions return the $\mathcal{U}_n^2$($\boldsymbol{\X,\Y}$) and dCorU$^2$($\boldsymbol{\X,\Y}$) values, respectively; that is, they do not take the square root. On the other hand, \textbf{\proglang{dcortools}} implements the \proglang{distcov} and \proglang{distcor} functions. The argument \proglang{bias.corr} allows to use the V-estimator when the argument is \proglang{FALSE}, or the U-estimator with \proglang{bias.corr = TRUE}.
	Additionally, the \textbf{\proglang{Rfast}} \cite{Rfast} package includes functions as \proglang{dvar}, \proglang{dcov}, \proglang{dcor}, and \proglang{bcdcor}, using the fast method proposed by \cite{huo_fast_2016}. And the \proglang{bcdcor} function computes the bias-corrected distance correlation of two matrices. \\
	
	A notable aspect to consider, as previously mentioned, is that the computation of  $\mathcal{U}_n^2(\boldsymbol{\X,\Y})$ in Equation (\ref{eq:U-statistic}) can yield negative values in cases of independence or very low levels of dependence and small sample sizes. Consequently, it becomes impossible to calculate dCorU using the expression in Equation (\ref{eq:dcorU}). This precludes the computation of the dCorU estimator as the square root of dCorU$^2$. This issue arises from the fact that $\mathcal{U}_n^2(\boldsymbol{\X,\Y})$ is an unbiased estimator of the squared distance covariance $\dcov^2(X,Y)$, which is 0 under independence. To the best of our knowledge, this problem has not been discussed in the literature. However some authors have implemented this estimator in software such as R and Python, giving the following alternatives to obtain the computation of dCorU. Authors as \cite{energy,Rfast} return dCorU$^2$ without computing the square root, while others, as \cite{dcortools} return for dCorU the square root of the absolute value of dCorU$^2$, with the sign corresponding to dCorU$^2$. However, as shown in Table \ref{tab:negatives} in Section \ref{sec:negatives}, when working in scenarios under independence, approximately 60\% of the estimates are negative values. For this reason, this study explores two proposals to address this challenge, resulting in the following alternative estimators of the distance correlation:
	
	\begin{itemize}
		\item Replace the values of $\mathcal{U}_n^2(X,Y)$ with their absolute value. This U-estimator of the distance correlation is denoted by dCorU(A).
		\item Consider $\max(\mathcal{U}_n^2(X,Y), 0)$ so negative values of $\mathcal{U}_n^2(X,Y)$ are truncated to zero. Denote this U-estimator of $\dcor$ as dCorU(T).   
	\end{itemize}  
	
	Under an independence scenario, both dCorU and dCorU(A) yield identical MSE. This is because the squared of the difference between the estimated value and the real value ($(\text{dCorU} - \dcor)^2)$ does not deviate from the difference obtained when using the absolute value ($(\text{dCorU(A)} - \dcor)^2$) when $\dcor = 0$.
	
	\section{New proposal for the estimation of the distance correlation} \label{sec:convex}
	
	Under independence or near-independence conditions ($\dcor \approx 0$), the V-estimator is biased as it always provides positive results. On the other side, the U-estimator frequently cannot be computed because of negative values of dCorU$^2(X,Y)$. The preference between the two estimators seems to depend also on the nature (linear, nonlinear) of the relationship between $X$ and $Y$, see results in Section 4.1. While simulations allow for studying the behavior of estimators and gaining insight into when to use each, in real-world scenarios, determining whether the relationship between $X$ and $Y$ is linear or nonlinear, and assessing the level of dependence or independence, can be challenging. Consequently, selecting the appropriate estimator becomes difficult. To overcome this limitation, this paper introduces a new estimator of teh distance correlation that overcomes this challenges. It is a convex linear combination of both estimators (dCorU, dCorV), denoted as $$\text{dCor}_{\lambda} = \lambda\text{dCorU} + (1 - \lambda)\text{dCorV},$$ where $\lambda \in [0,1]$ serves as a weighting parameter determining the balance between the two estimators. It is proposed to use the optimal value $\lambda_0$ for the parameter $\lambda$, which minimizes the Mean Squared Error (MSE) of this new estimator. This optimal value $\lambda_0$, as provided in Lemma \ref{le:lambda}, relies on various unknown quantities such as the covariance, variance, and bias of dCorU and dCorV. \\
	
	\begin{lemma}\label{le:lambda}
		Let $X \in \mathbb{R}^p$ and $Y \in \mathbb{R}^q$ be random vectors, where $p$ and $q$ are positive integers and define $\hat{\theta}^U = $dCorU(X,Y) and $\hat{\theta}^V = $dCorV(X,Y). Given the convex linear combination $\lambda \hat{\theta}^U + (1-\lambda)\hat{\theta}^V$, then, in the sense of minimizing the MSE, the optimal value of $\lambda$ is given by $\lambda_0$ defined as: 
		
		\begin{small}
			\begin{align}\label{eq:lambda_opt}
				\lambda_0 &= \frac{-\text{Cov}\left(\hat{\theta}^U, \hat{\theta}^V\right) + \text{Var}\left(\hat{\theta}^V\right) + \text{Bias}\left(\hat{\theta}^V\right)\left(\text{Bias}\left(\hat{\theta}^V\right)-\text{Bias}\left(\hat{\theta}^U\right)\right)}{\text{Var}\left(\hat{\theta}^U\right) + \text{Var}\left(\hat{\theta}^V\right) - 2 \text{Cov}\left(\hat{\theta}^U, \hat{\theta}^V\right) + \left(\text{Bias}\left(\hat{\theta}^U\right) - \text{Bias}\left(\hat{\theta}^V\right)\right)^2}.
			\end{align}
		\end{small}
	\end{lemma}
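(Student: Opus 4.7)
The plan is to minimize, as a function of $\lambda$, a one-dimensional quadratic that arises from the bias--variance decomposition. First I would write
$$\text{MSE}(\hat{\theta}_\lambda) \,=\, \text{Var}(\hat{\theta}_\lambda) + \text{Bias}(\hat{\theta}_\lambda)^2,$$
where $\hat{\theta}_\lambda = \lambda \hat{\theta}^U + (1-\lambda)\hat{\theta}^V$. Linearity of expectation gives $\text{Bias}(\hat{\theta}_\lambda) = \lambda\,\text{Bias}(\hat{\theta}^U) + (1-\lambda)\,\text{Bias}(\hat{\theta}^V)$, and the variance of a linear combination yields
$$\text{Var}(\hat{\theta}_\lambda) \,=\, \lambda^2\,\text{Var}(\hat{\theta}^U) + (1-\lambda)^2\,\text{Var}(\hat{\theta}^V) + 2\lambda(1-\lambda)\,\text{Cov}(\hat{\theta}^U,\hat{\theta}^V).$$

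For brevity, abbreviate these five unknown quantities as $V_U, V_V, C, B_U, B_V$. The MSE is then the quadratic
$$\text{MSE}(\lambda) \,=\, \lambda^2 V_U + (1-\lambda)^2 V_V + 2\lambda(1-\lambda) C + \bigl(\lambda B_U + (1-\lambda) B_V\bigr)^2.$$
Differentiating with respect to $\lambda$, expanding the squared cross-term $(\lambda B_U + (1-\lambda)B_V)^2 = \lambda^2 B_U^2 + 2\lambda(1-\lambda) B_U B_V + (1-\lambda)^2 B_V^2$, and grouping the coefficients of $\lambda$, the first-order condition $\text{MSE}'(\lambda_0) = 0$ reduces to the single linear equation
$$\lambda_0\,\bigl[V_U + V_V - 2C + (B_U - B_V)^2\bigr] \,=\, V_V - C + B_V(B_V - B_U),$$
which rearranges to precisely the expression (\ref{eq:lambda_opt}) in the statement.

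Finally, to certify that $\lambda_0$ is indeed the minimizer (and not merely a stationary point) I would compute the second derivative,
$$\text{MSE}''(\lambda) \,=\, 2\bigl[V_U + V_V - 2C + (B_U - B_V)^2\bigr] \,=\, 2\bigl[\text{Var}(\hat{\theta}^U - \hat{\theta}^V) + (B_U - B_V)^2\bigr] \,\geq\, 0,$$
so $\text{MSE}(\cdot)$ is convex in $\lambda$ and the stationary point is the global minimum. There is no conceptual obstacle here; the only care required is the algebraic bookkeeping when expanding the bias cross-term and correctly gathering the $\lambda$-coefficients. A minor caveat worth flagging is that formula (\ref{eq:lambda_opt}) implicitly requires the denominator to be strictly positive; it vanishes only in the degenerate case where $\hat{\theta}^U - \hat{\theta}^V$ is almost-surely constant and $B_U = B_V$, in which case the MSE does not depend on $\lambda$ at all and any choice in $[0,1]$ is optimal.
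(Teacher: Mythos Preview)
Your proposal is correct and follows essentially the same approach as the paper: both write $\text{MSE}(\hat{\theta}_\lambda)=\text{Var}(\hat{\theta}_\lambda)+\text{Bias}(\hat{\theta}_\lambda)^2$, expand the variance and bias of the linear combination, differentiate the resulting quadratic in $\lambda$, and solve the first-order condition. Your version is in fact slightly more complete, since you verify the second-order condition (convexity via $\text{Var}(\hat{\theta}^U-\hat{\theta}^V)+(B_U-B_V)^2\geq 0$) and flag the degenerate case where the denominator vanishes, neither of which the paper addresses explicitly.
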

	
	\begin{proof}
		Let $\hat{\theta}^C_{\lambda_0} = \lambda_0 \hat{\theta}^U + (1-\lambda_0)\hat{\theta}^V$ be the convex linear combination, then $$\text{MSE}\left(\hat{\theta}^C_{\lambda_0}\right) = \text{Var}\left(\lambda_0 \hat{\theta}^U + (1-\lambda_0)\hat{\theta}^V\right) + \text{Bias}\left(\lambda_0 \hat{\theta}^U + (1-\lambda_0)\hat{\theta}^V\right)^2,$$ where $$\text{Var}\left(\hat{\theta}^C_{\lambda_0}\right) =  \lambda_0^2\text{Var}\left(\hat{\theta}^U\right) + (1-\lambda_0)^2\text{Var}\left(\hat{\theta}^V\right) + 2\lambda_0(1-\lambda_0)\text{Cov}\left(\hat{\theta}^U, \hat{\theta}^V\right)$$ and $$\text{Bias}\left(\hat{\theta}^C_{\lambda_0}\right) = \lambda_0\text{Bias}\left(\hat{\theta}^U\right) + (1-\lambda_0)\text{Bias}\left(\hat{\theta}^V\right).$$ Therefore,
		\begin{eqnarray*}
			\text{MSE}\left(\hat{\theta}^C_{\lambda_0}\right) &=& \lambda_0^2\text{Var}\left(\hat{\theta}^U\right) + (1-\lambda_0)^2\text{Var}\left(\hat{\theta}^V\right) + 2\lambda_0(1-\lambda_0)\text{Cov}\left(\hat{\theta}^U, \hat{\theta}^V\right)  \\ & &\,\, + \left[\lambda_0\text{Bias}\left(\hat{\theta}^U\right) + (1-\lambda_0)\text{Bias}\left(\hat{\theta}^V\right)\right]^2,
		\end{eqnarray*} 
		where the optimal value of $\lambda$ is given by the solution to the following equation:
		\begin{eqnarray*}
			\frac{\partial}{\partial \lambda_0}\text{MSE}\left(\hat{\theta}^C_{\lambda_0}\right) &=& \lambda_0\left[\text{Var}\left(\hat{\theta}^U\right) + \text{Var}\left(\hat{\theta}^V \right) - 2\text{Cov}\left(\hat{\theta}^U, \hat{\theta}^V \right)  \right. \\ & & \, \, + \left. \left(\text{Bias}\left(\hat{\theta}^U \right) - \text{Bias}\left(\hat{\theta}^V \right)\right)^2\right] -  \text{Var}\left(\hat{\theta}^V \right)  \\ & & \, \, + \,\text{Cov}\left(\hat{\theta}^U, \hat{\theta}^V \right) +  \text{Bias}\left(\hat{\theta}^V \right) \left[\text{Bias}\left(\hat{\theta}^U\right)- \text{Bias}\left(\hat{\theta}^V \right) \right] \\ & = &  0.
		\end{eqnarray*}
		Solving the previous equation yields the expression of $\lambda_0$ in Equation (\ref{eq:lambda_opt}).
	\end{proof}
	
	Due to the unavailability of the exact values of each term in the expression of $\lambda_0$ in Equation (\ref{eq:lambda_opt}), it is proposed to estimate $\lambda_0$ using bootstrap, specifically, the smoothed bootstrap, see Algorithm 1.
	
	\begin{algorithm}[htbp]
		\small
		\caption{Bootstrap procedure for computing $\hat{\lambda}_0$}\label{pseudo:lambda}
		\begin{algorithmic}[1]
			\State \textbf{Input} a sample $\{(X_i,Y_i)\}_{i=1}^n$. 
			\State Select the bandwidths $h_1,h_2$
			\State Compute $\hat{\theta}$ = dCorV(\textbf{X,Y}).  
			\For{$b = 1$ to $B$} \{Generate B bootstrap samples\}
			\State Sample $n$ values, $U_1,\dots, U_n$, from $[1,\dots,n]$ with replacement.
			\State Generate $W_1^1,\dots,W_n^1$ iid and $W_1^2,\dots,W_n^2$ with common density $K$.
			\For{$i = 1$ to $n$} 
			\begin{eqnarray*}
				X_i^* &=& X_{[U_i]} + h_1W_i^1 \\ Y_i^* &=& Y_{[U_i]} + h_2W_i^2
			\end{eqnarray*}
			\EndFor
			\State Compute $\hat{\theta}^U_b =  $ dCorU$(\textbf{X}^*,\textbf{Y}^*)$ and $\hat{\theta}^V_b =  $ dCorV$(\textbf{X}^*,\textbf{Y}^*)$.
			\EndFor
			\State Compute Var$\left(\hat{\theta}^U\right)$, Var$\left(\hat{\theta}^V\right)$, Bias$\left(\hat{\theta}^U\right) = \frac{1}{B}\sum_{b = 1}^B \hat{\theta}^U_b - \hat{\theta}$, Bias$\left(\hat{\theta}^V\right) = \frac{1}{B}\sum_{b = 1}^B \hat{\theta}^V_b - \hat{\theta}$ and Cov$\left(\hat{\theta}^U, \hat{\theta}^V\right)$. 
			\State Obtain $\hat{\lambda}_0$ using the expression in Equation (\ref{eq:lambda_opt}).
		\end{algorithmic}
	\end{algorithm}
	
	The smoothed bootstrap relies on $K$, a kernel function (typically a symmetric density around zero), and $h > 0$, a smoothing parameter referred to as the bandwidth. In this case, $h_1,h_2$. The bandwidth regulates the size of the environment used for estimation. It is customary to stipulate that the kernel function $K$ be non-negative and have an integral of one. Additionally, it is often expected for $K$ to be symmetric. While the selection of the function $K$ does not significantly influence the properties of the estimator (aside from its regularity conditions like continuity, differentiability, etc.), the choice of the smoothing parameter is crucial for accurate estimation. In essence, the size of the environment utilized for nonparametric estimation must be appropriate-not excessively large nor too small. \\
	
	In \proglang{R} it is possible to use the \proglang{density()} function of the base package to obtain a kernel-like estimate of the density (with the bandwidth determined by the bw parameter), some of the methods implemented include Silverman's \cite{silverman1986density} and Scott's \cite{scott} rules of thumb, unbiased and biased cross-validation methods and direct plug-in methods, among others. Although it is possible to use implementations from other packages.
	
	\section{Simulation study} \label{sec:simulation}
	
	The simulation study employs the \textbf{\proglang{dcortools}} package, specifically utilizing the \proglang{distcor} function. To estimate the distance correlation through $\mathrm{dCorU}$, the code used is \proglang{distcor(X,Y,bias.corr = T)}. For computing $\mathrm{dCorV}$, there are two options:  \proglang{distcor(X,Y,bias.corr = F)} or simply \proglang{distcor(X,Y)}. The study is divided into three main parts. First, a comparison is conducted between the original estimators. Subsequently, a comparison is made between the dCorU estimator and the proposed alternatives to mitigate negative values when dealing with dependence or very low dependence and small sample sizes. Finally, a comprehensive comparison is performed, encompassing all scenarios. This includes the original estimators, comparisons of the proposed alternatives for dCorU, and estimates for the proposed convex linear combination. \\
	
	For each scenario, the mean, bias, variance and mean squared error (MSE) are computed (see Appendix \ref{secA1}). Each simulation is repeated for 1000 Monte Carlo iterations, along with three sample sizes: 100, 1000, and 10000. To compare the efficiency (MSE) and computational time of each estimator, three models with varying levels of dependence are considered, FGM, bivariate normal and a nonlinear models, defined as follows.
	
	\subsection{Models}
	
	\textbf{FGM model.} The first model corresponds to a copula. One of the most popular parametric families of copulas is the Farlie-Gumbel-Morgenstern (FGM) family, which is defined by 
	\begin{eqnarray*}
		C^{FGM}(x,y) &=& xy[1+\theta(1-x)(1-y)], \quad \quad \theta \in [-1,1]
	\end{eqnarray*}
	with copula density given by 
	\begin{eqnarray}\label{eq:density_FGM}
		c^{FGM}(x,y) &=& 1+\theta(2x-1)(2y-1), \quad \quad \theta \in [-1,1],
	\end{eqnarray}
	where $f_X(x), f_Y(y) \sim U(0,1)$. A well-known limitation of this family is that it does not allow the modeling of high dependences since Pearson correlation coefficient is limited to $\rho = \frac{\theta}{3}\in\left[-\frac{1}{3},\frac{1}{3}\right]$. Accurate calculation of the distance covariance is achieved using Equation (\ref{eq:expectations}), where the density function corresponds to Equation (\ref{eq:density_FGM}). Similarly, calculations were performed for $\dcov(X)$ and $\dcov(Y)$. As a result, $\dcor(X,Y) = \frac{|\theta|}{\sqrt{10}}$ is obtained. 	It is important to note that the dependence is not strong, specifically, $0 \leq \dcor \leq 0.31622$ (see Figure \ref{fig:Sample}).\\ 
	
	\textbf{Bivariate normal model}. The exact result for the distance correlation in this case is provided by \cite{szekely_measuring_2007} and is expressed as a function of the Pearson correlation coefficient. If $(X,Y)$ has bivariate normal distribution with unit variance each, then, the squared distance correlation is given by:
	\begin{eqnarray*}
		\dcor^2(X,Y) = \frac{\rho \arcsin{\rho}+\sqrt{1-\rho^2}-\rho \arcsin{\rho/2}-\sqrt{4-\rho^2}+1}{1+\pi/3-\sqrt{3}}.
	\end{eqnarray*}
	
	In contrast to the previous model, the bivariate normal model allows complete dependence, resulting in $\dcor = 1$ when $\rho=1$ or  $\rho=-1$. This feature facilitates examining the performance of the estimators under high levels of dependence. \\ 
	
	\textbf{Nonlinear model.} Let $(X, Y)$ be a bivariate random variable with density  $$f_{X,Y}(x,y)=c\left[1-\left(y-4\left(x-\frac{1}{2}\right)^2\right)^2\right]^kI_{[0,1]}(x)I_{[0,1]}(y),$$ with $k \in \mathbb{Z}$, controls the degree of dependence (it increases with $k$), and $c$ is a fixed value that depends on the value of $k$. For this model the $k$'s used are $0$ (independence), $2,4,8$ and $16$ (strong dependence). An initial observation is that even at a relatively low level of dependence, such as $k = 4$, the model's behavior becomes discernible (see Figure \ref{fig:Sample}). It is important to notice that, in this model, he value of the distance correlation when $X$ and $Y$ are totally dependent (i.e. when $k \to \infty$) is not 1, but $\dcor \approx 0.41$. So, unlike the linear models (FGM and normal bivariate), in this model low values of $\dcor$ ($0.22 \leq \dcor \leq 0.41$) could indicate a strong relationship between $X$ and $Y$. Figure \ref{fig:Sample} displays four samples drawn along with their respective distance correlation. The lines represent the conditional mean $E[Y|X = x]$. \\
	\begin{figure}[htbp]
		\centering
		\begin{subfigure}{1\textwidth}
			\includegraphics[width=\textwidth]{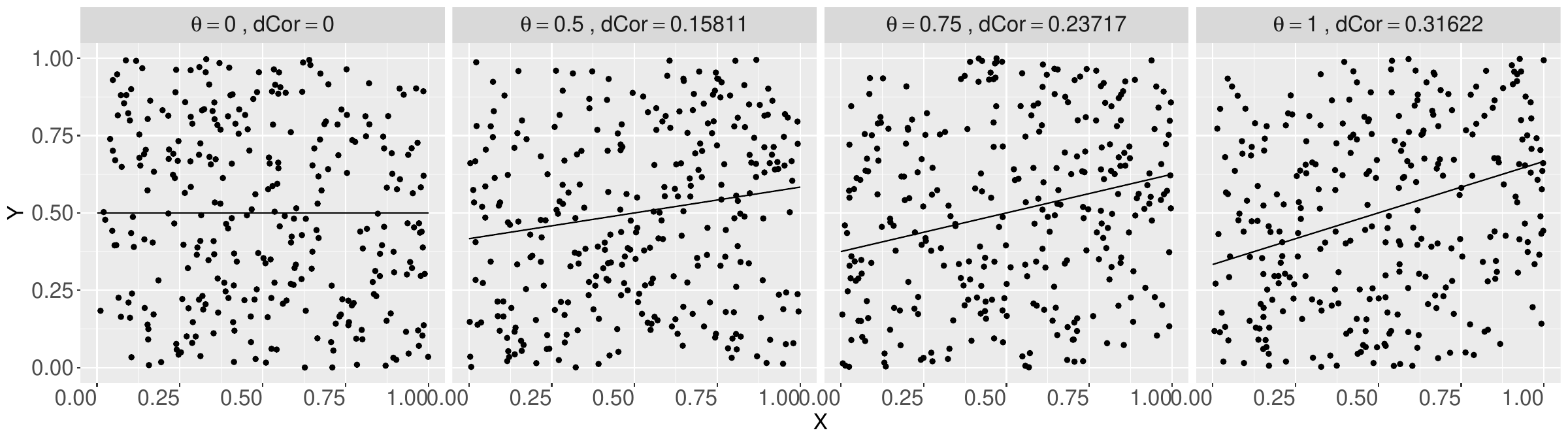}
		\end{subfigure}
		\begin{subfigure}{1\textwidth}
			\includegraphics[width=\textwidth]{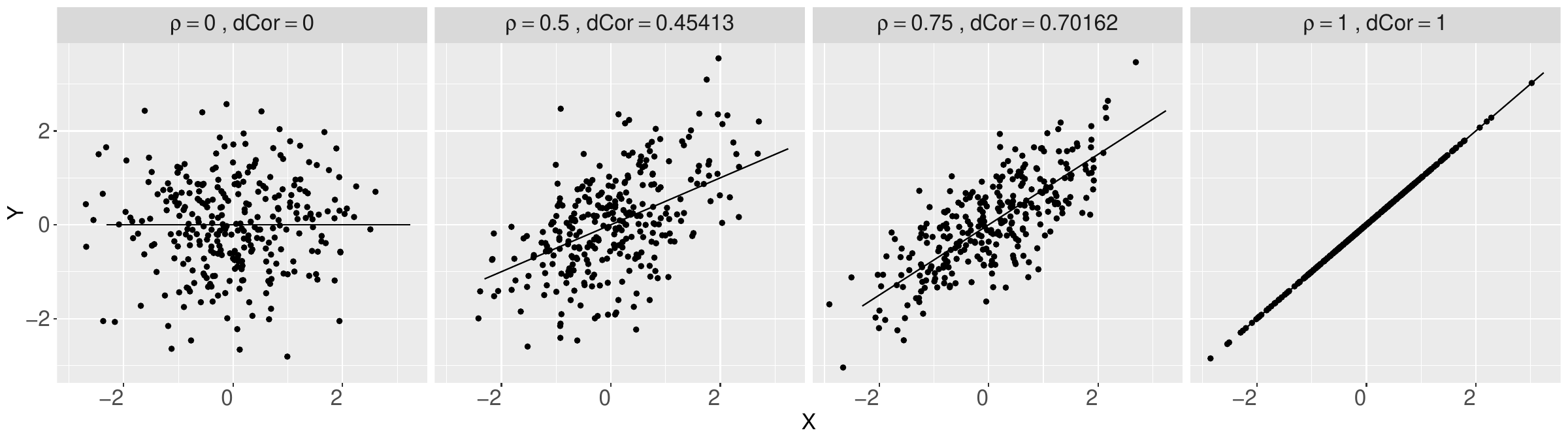}
		\end{subfigure}
		\begin{subfigure}{1\textwidth}
			\includegraphics[width=\textwidth]{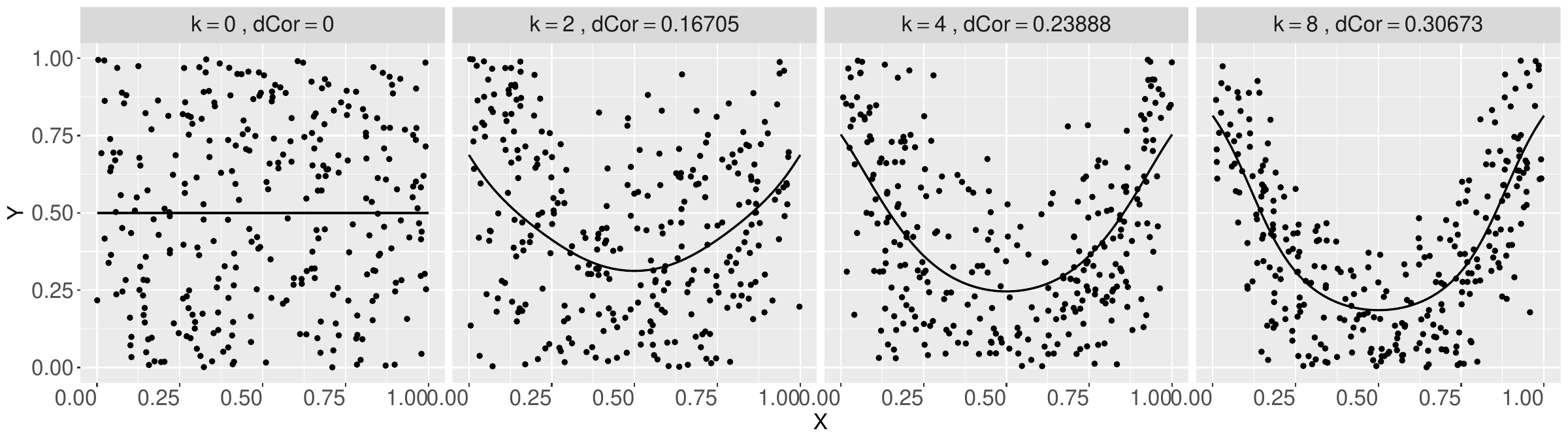}
		\end{subfigure}
		\caption{Samples ($n = 300$) generated for each model: FGM in the first row, bivariate normal in the second row, and nonlinear in the last row, across different values of $\theta, \rho,$ and $k$, respectively, along with their corresponding distance correlation.}
		\label{fig:Sample}
	\end{figure}
	
	\subsection{Comparison between dCorU and dCorV estimators} \label{sec:negatives}
	
	In this part of the simulation study, the original estimators, dCorU and dCorV, are compared. Firstly, the comparison for the MSE of dCorU and dCorV across the different sample sizes for each distance correlation for the three  models is shown in Figure \ref{fig:MSE}. Each simulation is repeated for 1000 Monte Carlo iterations, along with three sample sizes: 100, 1000, and 10000. Since the \textbf{\proglang{dcortools}} package is being used, then dCorU turns out to be dCorU = sign(dCorU$^2$) $\sqrt{|\text{dCorU}^2|}$. \\
	
	\begin{figure}[htbp!]
		\centering
		\begin{subfigure}{1\textwidth}
			\includegraphics[width=\textwidth]{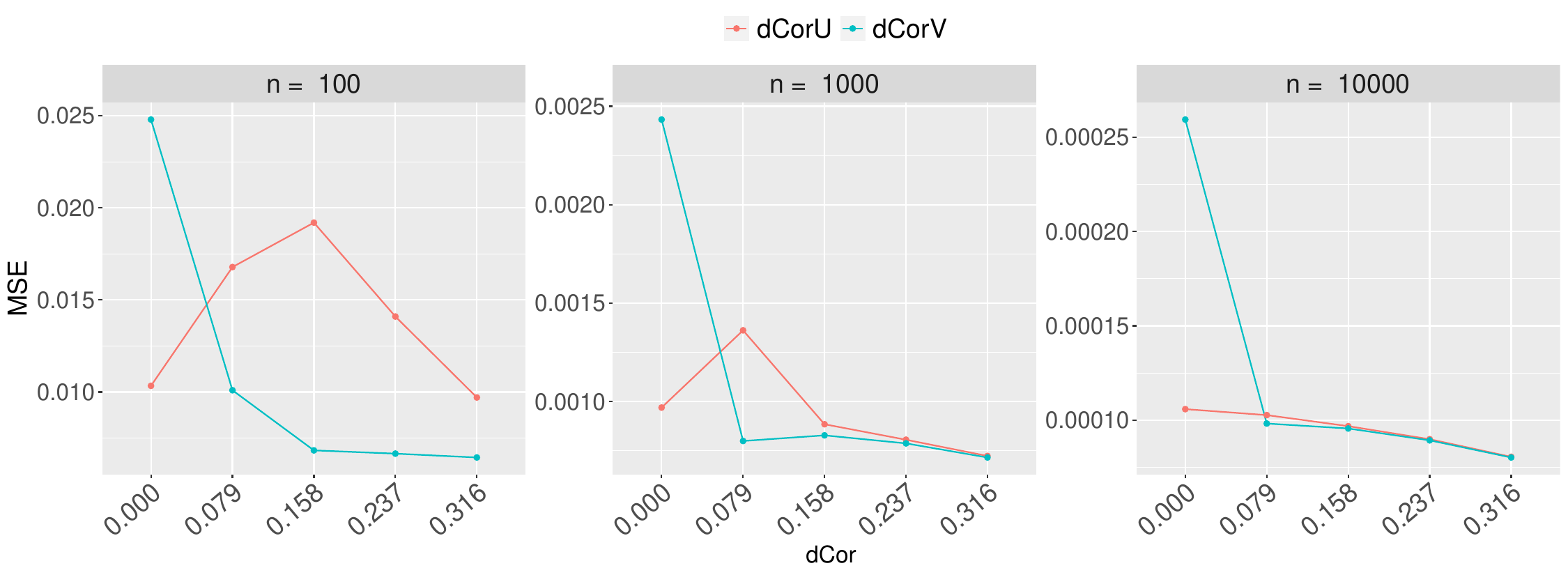}
			\caption{FGM-Model for distance correlation values from $\dcor = 0 \,\, (\theta = 0)$ to $\dcor = 0.316 \,\, (\theta = 1)$.}
			\label{fig:FGM_MSE}
		\end{subfigure}
		\begin{subfigure}{1\textwidth}
			\includegraphics[width=\textwidth]{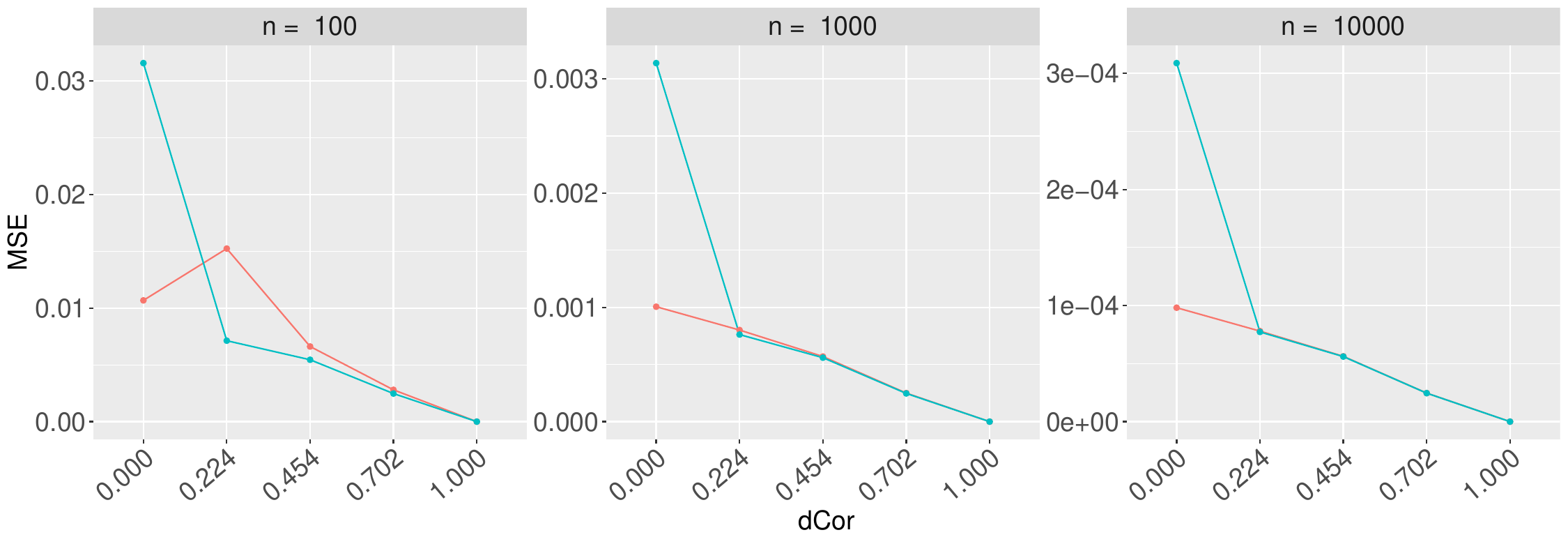}
			\caption{Bivariate normal model for distance correlation values from $\dcor = 0 \, (\rho = 0)$ to $\dcor = 1 \, (\rho = 1)$.}
			\label{fig:BN_MSE}
		\end{subfigure}
		\begin{subfigure}{1\textwidth}
			\includegraphics[width=\textwidth]{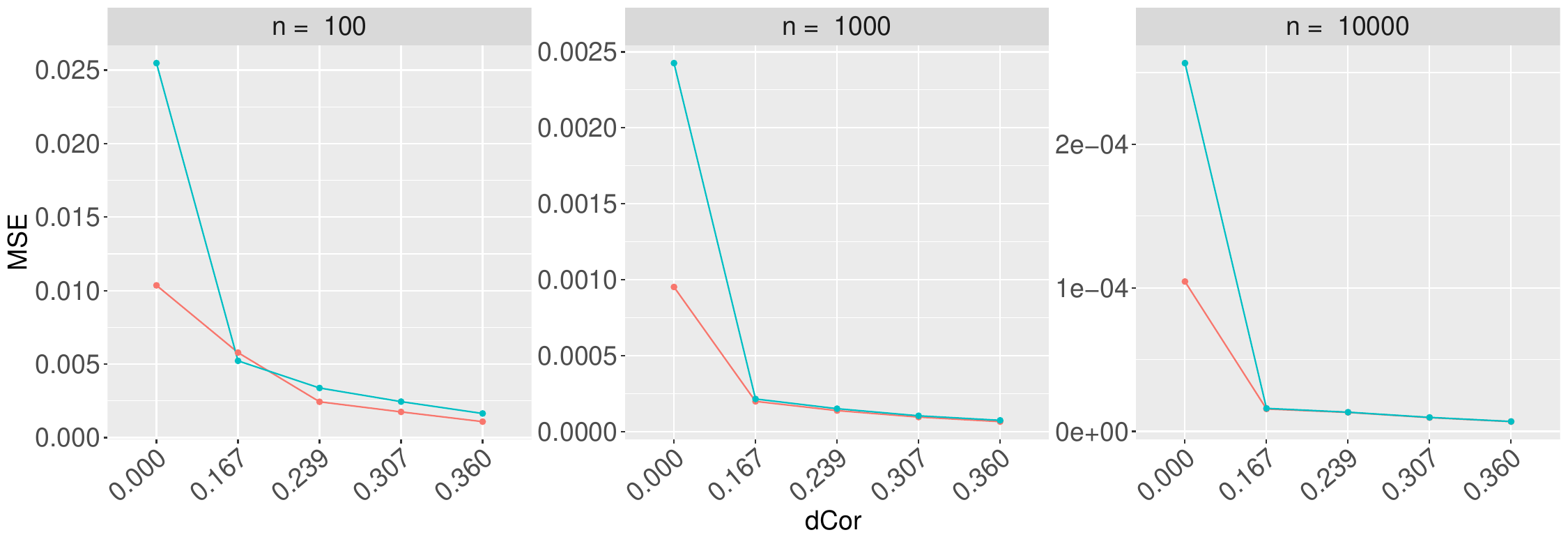}
			\caption{Nonlinear model for distance correlation values from $\dcor = 0 \,\, (k = 0)$ to $\dcor = 0.36\,\, (k = 16)$.}
			\label{fig:NLM_MSE}
		\end{subfigure}
		\caption{MSE of dCorU and dCorV under the different models with sample sizes $n = 100$ (left), $n = 1000$ (center) and $n = 10000$ (right) for different distance correlation values.}
		\label{fig:MSE}
	\end{figure}
	
	For the FGM model the differences between the two estimators become more significant across all values of $\theta$. Under independence ($\dcor = 0$), dCorU outperforms dCorV in terms of MSE for all three sample sizes. However, when there is dependence, dCorV shows better results, even in the presence of a small degree of dependence. As the level of dependence increases, the MSE of both estimators decreases and the differences fade away.\\ 
	
	The MSE obtained for the bivariate normal model is shown in Figure \ref{fig:BN_MSE}. The conclusions are similar to the previous case. Under independence ($\dcor = 0$), dCorU emerges as the superior estimator. However, as soon as there is no independence, the conclusion is exchanged, and dCorV is the best option. As dependence gets stronger ($\dcor \geq 0.454$), the sample size becomes less influential, and both estimators tend to converge and provide the same value. Furthermore, for larger sample sizes, a moderate dependence ($\dcor > 0.224$) is sufficient to observe similar behavior between the estimators. It is worth noting that under complete dependence ($\dcor = 1$), the estimates were the same for each sample size. \\ 
	
	The results for the nonlinear model are shown in Figure \ref{fig:NLM_MSE}. Similar to the previous models, under independence  the optimal estimate is provided by dCorU. However, dCorV estimator does not systematically emerge as the superior choice at any specific level of dependence. Both estimators exhibiting similar behavior with large sample sizes and at dependence holds true for this model as well. \\ 
	
	The results of the bias and variance of both estimators are presented in Appendix \ref{secA1}. They provide information that is consistent with the conclusions drawn from the MSE analysis. In particular, the most substantial difference between the two estimators arises in the case of a small sample size ($n = 100$) and weak dependence or independence. It is observed that dCorU shows a negative bias that approaches zero with increasing levels of dependence. Similarly, dCorV also shows a decreasing bias, although always positive. On the other hand, the variance of dCorV is comparatively smaller than the variance of dCorU. In contrast, for larger sample sizes ($n = 1000, 10000$), the results show insignificant differences between the two estimators.\\

	The computational times for each estimator are in Table \ref{tab:time}. The characteristics of the computer equipment used are the following ones: CPU 12th Gen Intel(R) Core(TM) i7-1280P 2.00 GHz and RAM 16 GB. Note that there are no significant differences in the computation times between estimators. It is important to highlight that these times were obtained with the \textbf{\proglang{dcortools}} package. Where both the U-statistic and the V-statistic of distance correlation can be calculated in $\mathcal{O}(n \log n)$ steps. In fact, the algorithm in \cite{huo_fast_2016} can be straightforward extended to the V-statistic version. If another package were used, for example the \textbf{\proglang{energy}} package, the computation times of both methods will be different.
	
	\begin{table}[htbp!]
		\begin{tabular}{ c   c  c |  c c |c c }
			\multicolumn{1}{c}{} & \multicolumn{2}{c}{$n =$ 100} & \multicolumn{2}{ c }{$n =$ 1000} & \multicolumn{2}{ c }{$n =$ 10000}  \\
			& dCorU & dCorV & dCorU & dCorV & dCorU & dCorV \\ \cmidrule{2-7} 
			Time &  0.36 & 0.33 & 0.63 & 0.58 & 2.22 & 2.51 \\ 
		\end{tabular} 
		\caption{Computational time in secs for 1000 samples.}
		\label{tab:time}
	\end{table} 
	
	\subsection{Alternatives to dCorU}
	As mentioned above, when working under independence or with very low dependencies, then the dCorU estimator frequently cannot be computed because of negative estimates of the squared covariance. Table \ref{tab:negatives} provides the percentage of negative values obtained for different sample sizes across the three models under varying levels of dependence. This section compares the dCorU estimator to the alternatives proposed in Section 2, dCorU(A) and dCorU(T). Figure \ref{fig:truncated} illustrates the comparison of the MSE between dCorU and the proposed methods for each model. \\
	
	\begin{table}[htbp!] 
		\begin{adjustbox}{width=\textwidth}
			\begin{tabular}{c c c c c  c c c c c c c c c c}
				\multicolumn{5}{c}{\textbf{FGM-Model}} & \multicolumn{5}{c}{\textbf{Bivariate Normal}} & \multicolumn{5}{c}{\textbf{Nonlinear model}} \\
				\toprule
				\multicolumn{2}{c}{ } & \multicolumn{3}{c}{\% of negative values} & \multicolumn{2}{c}{ } & \multicolumn{3}{c}{\% of negative values} & \multicolumn{2}{c}{ } & \multicolumn{3}{c}{\% of negative values} \\
				\multicolumn{2}{c}{ } & \multicolumn{3}{c}{$n$} & \multicolumn{2}{c}{ } & \multicolumn{3}{c}{$n$} & \multicolumn{2}{c}{ } & \multicolumn{3}{c}{$n$}\\
				\cmidrule{3-5} \cmidrule{8-10}
				\multicolumn{1}{c}{$\theta$} & $\dcor$ & $100$ & $1000$ &  $10000$ & \multicolumn{1}{c}{$\rho$} & $\dcor$ & $100$ & $1000$  & $10000$ & \multicolumn{1}{c}{$k$} & $\dcor$ & $100$ & $1000$ &  $10000$\\ 
				\hline
				\hline
				0    & 0     & 65.4 &  65.6 & 62.3 & 0    & 0     & 59.2 & 62.6 & 64.3 & 0  & 0     & 60.6 & 65.4 & 62.7\\
				0.25 & 0.079 & 50.3 &  5.1  & 0    & 0.25 & 0.224 & 9.3  & 0    & 0    & 2  & 0.167 & 4.7  & 0    & 0   \\
				0.5  & 0.158 & 26   &  0    & 0    & 0.5  & 0.454 & 0    & 0    & 0    & 4  & 0.239 & 0.1  & 0    & 0   \\ 
				0.75 & 0.237 & 6.5  &  0    & 0    & 0.75 & 0.702 & 0    & 0    & 0    & 8  & 0.307 & 0    & 0    & 0   \\ 
				1    & 0.316 & 1    &  0    & 0    & 1    & 1     & 0    & 0    & 0    & 16 & 0.360 & 0    & 0    & 0\\ 
				\bottomrule
			\end{tabular}
		\end{adjustbox}		  
		\caption{Percentage of negative values for $\mathcal{U}_n^2(X,Y)$ obtained when computing dCorU$^2$ in (\ref{eq:dcorU}) for each of the models across different scenarios. }
		\label{tab:negatives}
	\end{table}

	The FGM-Model presents negative values across all the studied values of $\theta$ when $n = 100$. As the sample size increases, the problem of negative values only appears when the dependence is very weak ($n = 100$) or just under independence ($n = 10000$), see Table \ref{tab:negatives}. Despite the notably high percentage of negative values in the case of independence, the differences between the estimators are not substantial.  However, the most accurate estimation is achieved using dCorU(T), given that $\dcor = 0$. Conversely, under dependence, dCorU(A) provides a more accurate estimation. Nevertheless, as the level of dependence strengthens, the MSE values for all three estimations tend to the same value. \\
	
	In the case of the bivariate normal model (Figure \ref{fig:truncated}), negative values are encountered only under independence ($\dcor = 0$) or low levels of dependence ($\dcor = 0.22$), both with $n=100$. Under strong dependence or large sample sizes, the computation of dCorU does not exhibit the issue of negative values (at least with $n \geq 100$). The conclusions are parallel to those of the previous model. In the case of independence, dCorU(T) emerges as the optimal estimator, while in another scenario, it is dCorU(A). In the remaining cases, all three estimators yield comparable MSE. Furthermore, for $n=10000$, the issue of negative values arises in the independence scenario (Table \ref{tab:negatives}), but the differences remain insignificant.\\
	
	Finally, in the case of the nonlinear model, negative values are observed for weak dependence scenarios ($k = 0,2,4$) with $n = 100$. However, as the sample size increases ($n = 1000, 10000$), the problem only arises under independence ($k = 0$). Note that the conclusions are similar to the two previous cases.  Under independence, the optimal estimator is dCorU(T), whereas in one of the other cases, it is dCorU(A). For the remaining cases, all three estimators yield the same MSE (Figure \ref{fig:truncated}). 
	
	\begin{figure}[htbp!]
		\centering
		\begin{subfigure}{1\textwidth}
			\includegraphics[width=\textwidth]{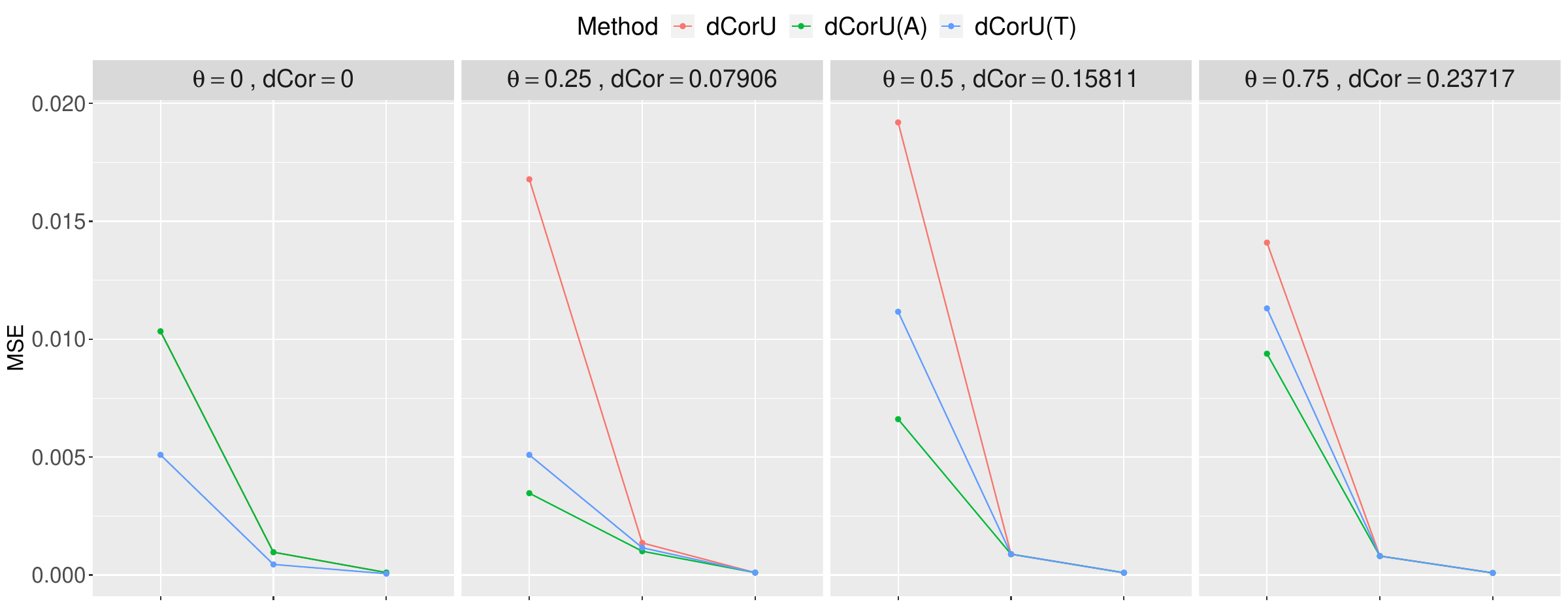}
		\end{subfigure}
		\begin{subfigure}{1\textwidth}
			\includegraphics[width=1\textwidth]{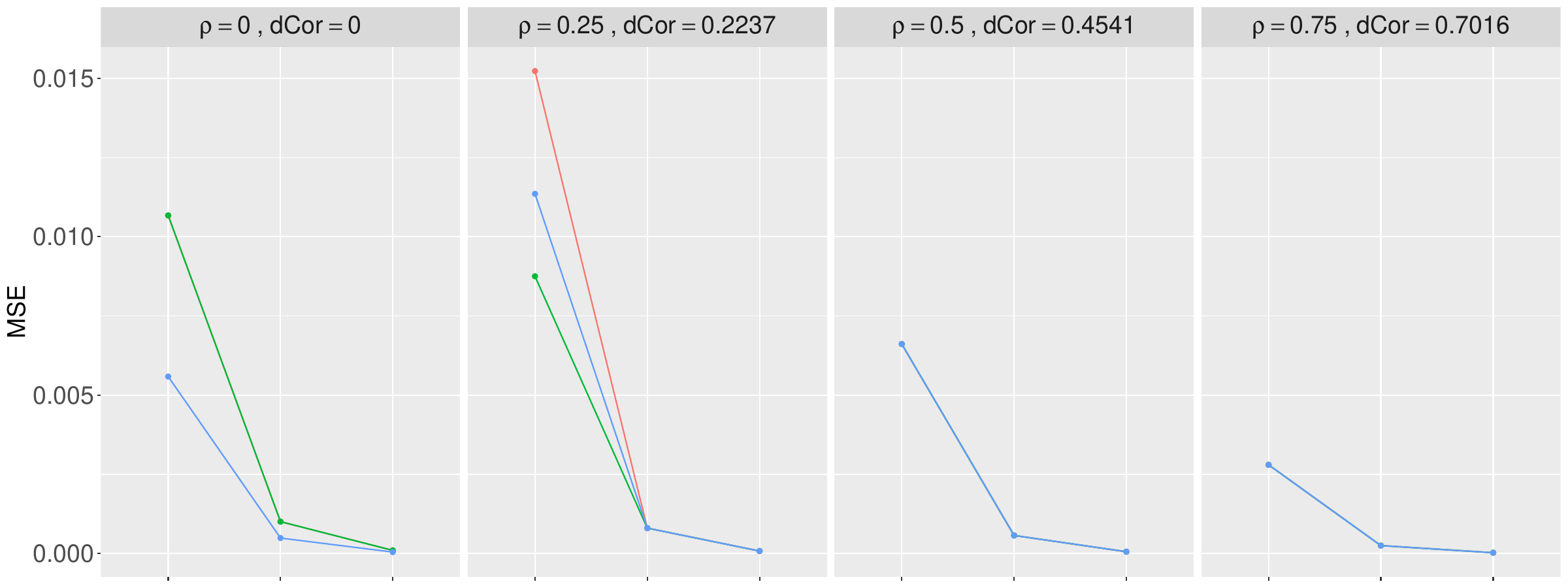}
		\end{subfigure}
		\begin{subfigure}{1\textwidth}
			\includegraphics[width=\textwidth]{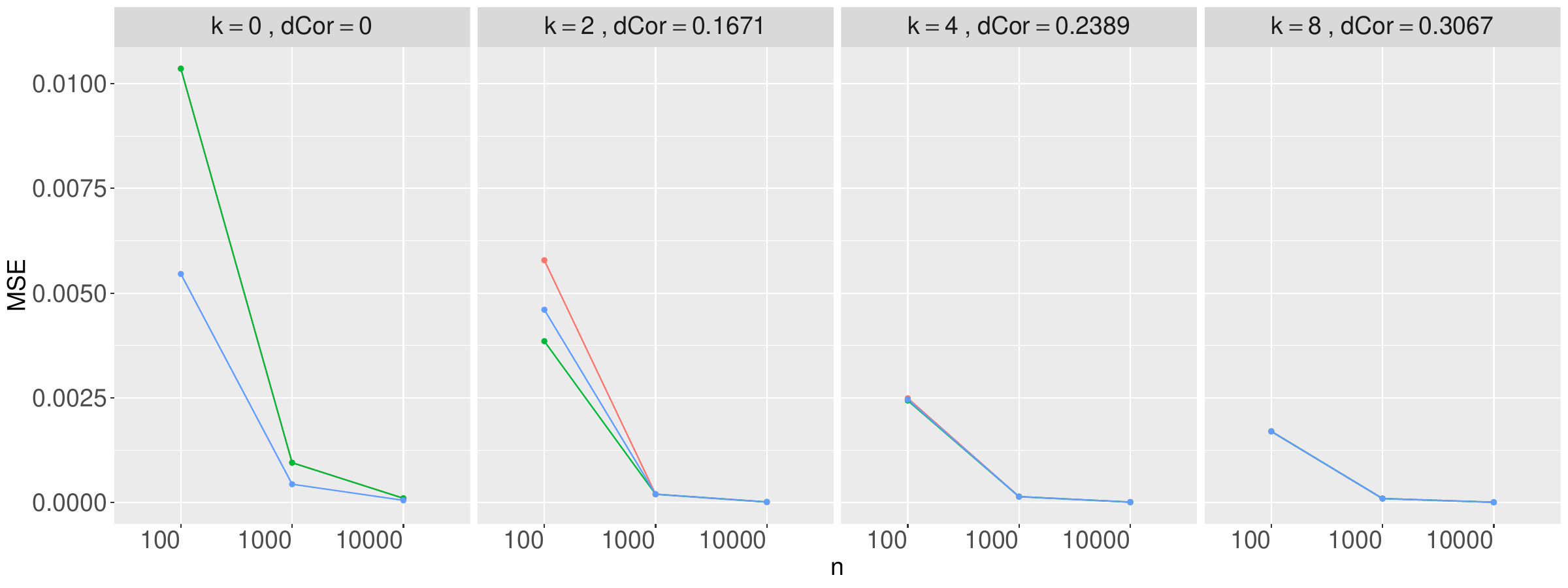}
		\end{subfigure}
		\caption{MSE values for dCorU, dCorU with absolute value (dCorU(A)), and truncated value (dCorU(T)) across three sample sizes ($n$) and three different models: the FGM model in the first row, the bivariate normal model in the second row, and the nonlinear model in the last row. The corresponding distance correlation values are also presented.}
		\label{fig:truncated}
	\end{figure}
	
	\subsection{New estimator for the distance correlation} \label{sec:studyC}
	
	This section explores and compares the original estimators (dCorU and dCorV) with proposed variations, including dCorU(A) and dCorU(T), along with their corresponding convex combinations computed with the optimal value of the parameter, $\lambda_0$ (Eq. \ref{eq:lambda_opt}), as follows: 
	\begin{eqnarray*}
		\text{dCor}_{\lambda_0} &=& \lambda_0\text{dCorU} + (1 - \lambda_0) \text{dCorV}, \\ \text{dCor(A)}_{\lambda_0} &=& \lambda_0\text{dCorU(A)} + (1 - \lambda_0) \text{dCorV}, \\ \text{dCor(T)}_{\lambda_0} &=& \lambda_0\text{dCorU(T)} + (1 - \lambda_0) \text{dCorV}.
	\end{eqnarray*}
	
	The comparisons are conducted across two scenarios for the convex linear combination. The first scenario involves estimating the optimal $\lambda_0$ (Eq. (\ref{eq:lambda_opt})) using Algorithm \ref{pseudo:lambda}, while the second scenario utilizes the optimal value of $\lambda_0$ in Lemma \ref{le:lambda}, where the bias, variance covariance terms in Equation (\ref{eq:lambda_opt}) are approximated using 1000 Monte Carlo repetitions. The study focuses on the behavior of the MSE in each model under different levels of dependence and sample size $n = 100$ (see Figure \ref{fig:lambda_opt}). \\
	
	The value of $\hat{\lambda}_0$ was obtained using Algorithm \ref{pseudo:lambda} with 1000 bootstrap iterations. The bandwidths, considered $h_1 = h_2$ for simplicity, were chosen as the value of the grid between 0.0025 and 0.32 with the lowest MSE. A simulation study showing the effect of the bandwidths is included in Appendix \ref{secA1}. It is evident that the bandwidth has an effect only under independence conditions, with lower bandwidth values corresponding to lower MSE. When the level of dependence is moderate, the impact of the bandwidth becomes less significant, and under high dependence, the results for different bandwidth values tend to converge.\\
	
	\begin{figure}[htbp]
		\centering
		\includegraphics[width=1\textwidth]{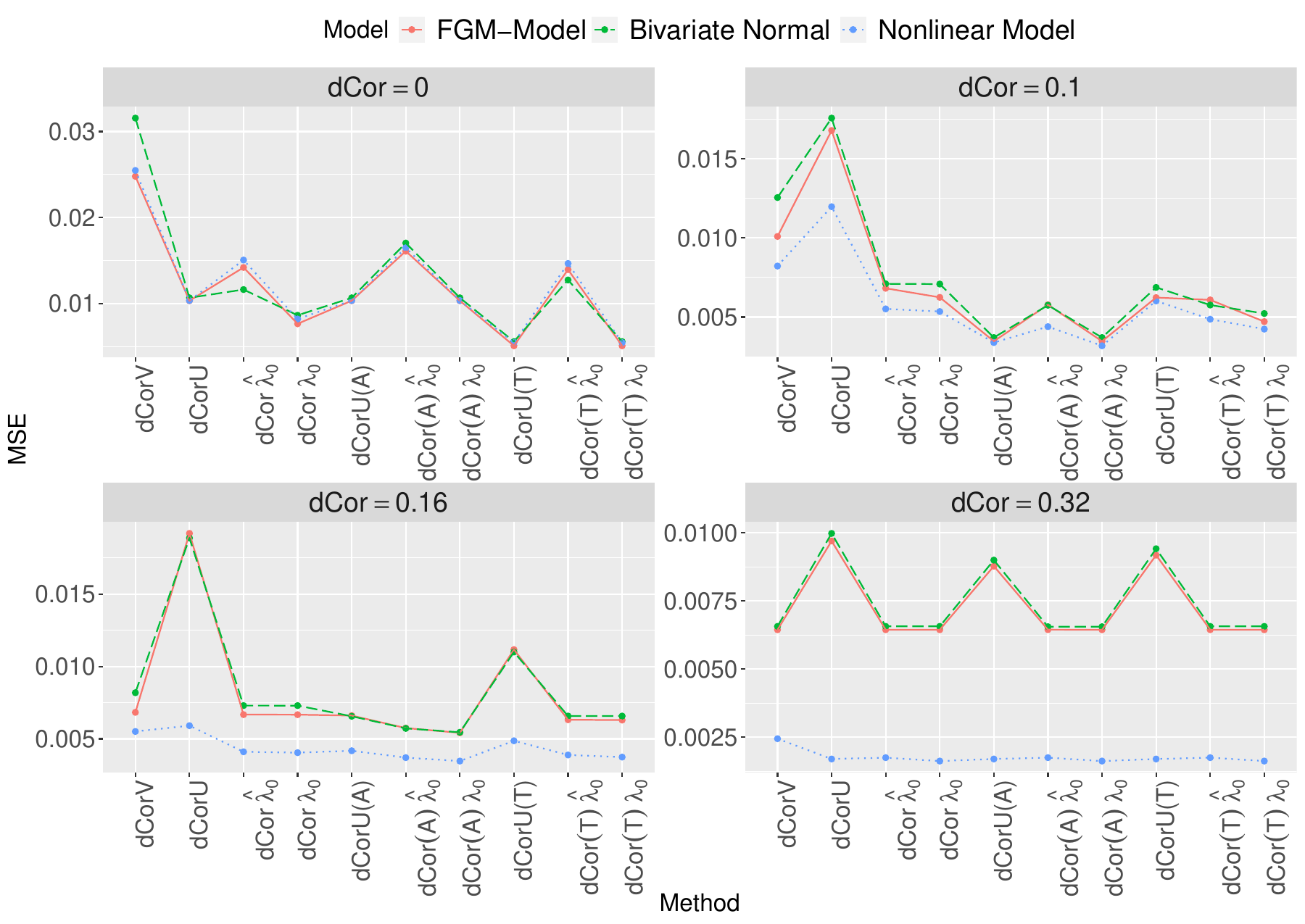}
		\caption{Comparison of the MSE between the different estimators and their respective convex linear combination and sample size $n = 100$. Specifically, $\text{dCor}_{\hat{\lambda}_0}$ denotes the combination $\hat{\lambda}_0$dCorU + $\left(1 - \hat{\lambda}_0\right)$dCorV with $\hat{\lambda}_0$ estimated using 1000 bootstrap replications. Similarly, $\text{dCor}_{\lambda_0}$ denotes the combination $\lambda_0$dCorU + $(1 - \lambda_0)$dCorV using the real optimal value of $\lambda_0$ (Eq. (\ref{eq:lambda_opt})). The same naming convention applies to dCorU(A) and dCorU(T), each representing their respective combinations. The values of $\dcor$ shown are rounded values of the corresponding ones from each model.}
		\label{fig:lambda_opt}
	\end{figure}
	
	In Figure \ref{fig:lambda_opt}, the maximum value of the distance correlation considered is $\dcor = 0.31$, which corresponds to the maximum level achievable in the FGM-Model. It is notable that, even though $\dcor$ is approximately the same for all three models, the degree of dependence associated to each value of $\dcor$ is different for each scenario, especially for the nonlinear model. This causes the behavior of the estimators is not comparable over the models for a fixed value of $\dcor$. Among the main observations that can be made: under independence ($\dcor = 0$), the proposed estimator $\text{dCor}_{\lambda 0}$ outperforms dCorU and dCorV when using the optimal $\lambda_0$. The performance of the proposed estimator worsens slightly when $\lambda_0$ is estimated, but it still gives much better results than dCorV. As soon as there is no independence ($\dcor > 0$), the proposed estimator with estimated $\lambda_0$ gives comparable results to the one with the optimal $\lambda_0$, and it improves both dCorU and dCorV \\
	
	These conclusions can be observed for each of the models in Figure \ref{fig:h}, where it is noticeable that the most significant differences with each of the estimators occur under conditions of independence. Conversely, as dependence increases, the estimates tend to become more comparable. Specifically, the nonlinear model exhibits a decrease in MSE as $\dcor$ increases, while the linear models do not show such a significant decrease. \\
	
	Figure \ref{fig:Dep_results} displays the MSE for each of the models under varying levels of dependence, from independence ($\dcor = 0$) up to the maximum dependence allowed by each model. It is worth noting that the nonlinear model reaches a maximum $\dcor$ value of around 0.4134, indicating complete dependence between $X$ and $Y$. Note that the behavior observed in the nonlinear model is also evident in the normal bivariate model, wherein the MSE remains quite similar across all cases starting from $\dcor = 0.75$. Furthermore, the reduction in MSE becomes evident for each of the cases as $\dcor$ increases, eventually reaching 0 in the case of complete dependence ($\dcor = 1$). However, this is not the case for the nonlinear model, where the MSE does not reach zero for the full dependence scenario ($\dcor = 0.41$). \\
	
	It is important to mention that in cases of independence or very low dependence, dCorU(T) and its combination with $\lambda_0$ exhibit the lowest MSE. Additionally, it is worth noting that the bootstrap approximation of $\lambda_0$ shows larger discrepancies from the actual value in scenarios of independence or low dependence (see Appendix \ref{secA1}). However, when dependence is low (approximately $\dcor = 0.1$), dCorU(A) and its convex combinations $\left(\text{dCor(A)}_{\lambda_0} \text{and } \text{dCor(A)}_{\hat{\lambda}_0}\right)$ result in a lower MSE. Under moderate or high dependence ($\dcor \geq 0.24$), convex linear combinations tend to converge to the same MSE, consistently approaching the best estimate provided by the individual estimators (dCorU, dCorV, dCorU(A), dCorU(T)).\\
	
	It becomes evident that the convex linear combination using the value $\hat{\lambda}_0$ obtained through 1000 bootstrap repetitions consistently approximates the best estimates provided by the original estimators (dCorU, dCorV) across all scenarios examined. While it is true that under specific conditions, some alternative combinations appear to yield slightly smaller errors, it is important to note that distinguishing between working under independence or with very low dependence may not always be feasible.
	
	\begin{figure}[htbp!]
		\centering
		\begin{subfigure}{0.49\textwidth}
			\includegraphics[width=\textwidth]{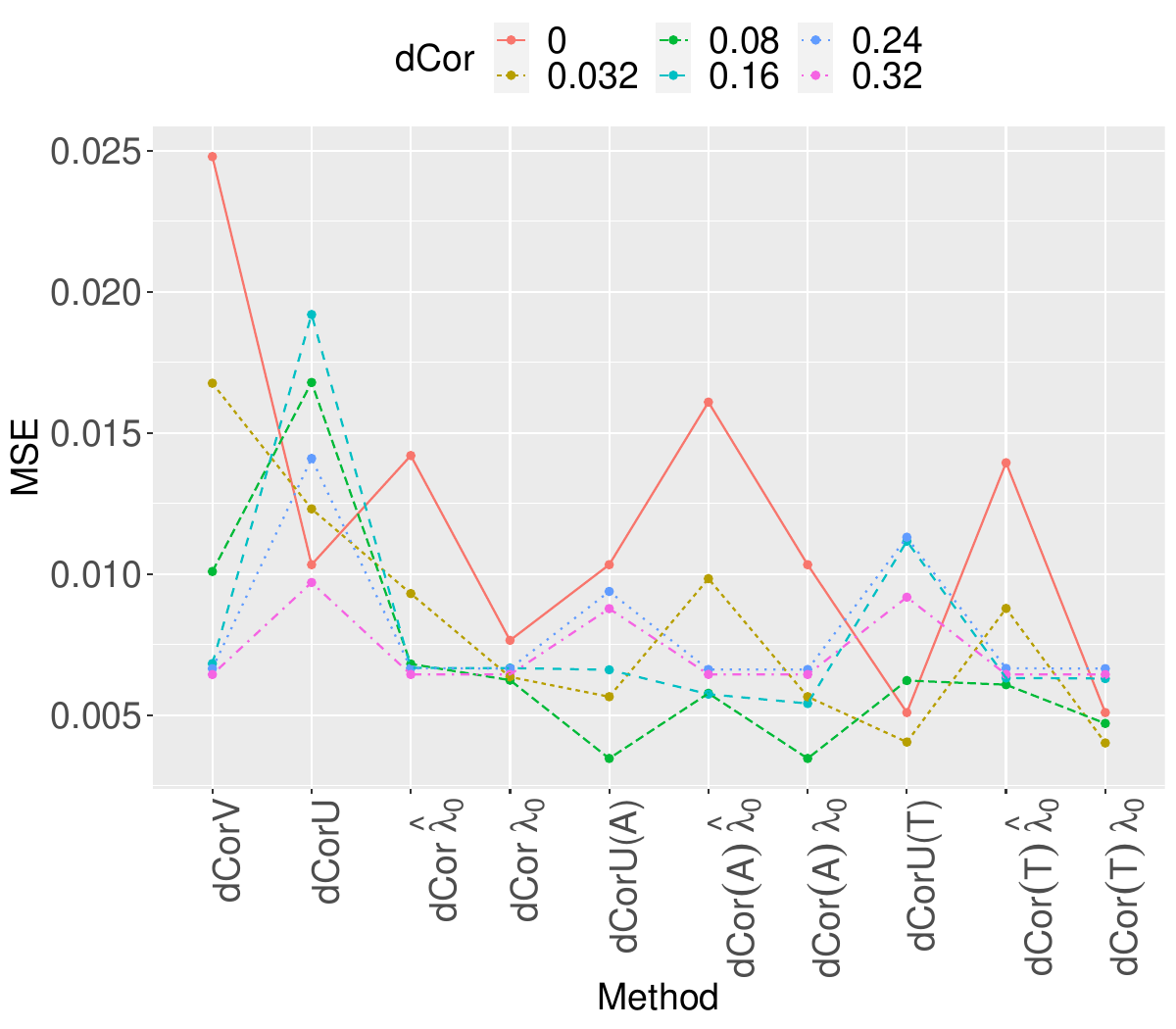}
			\caption{FGM-Model}
			\label{fig:FGM_dep}
		\end{subfigure} 
		\begin{subfigure}{0.49\textwidth}
			\includegraphics[width=\textwidth]{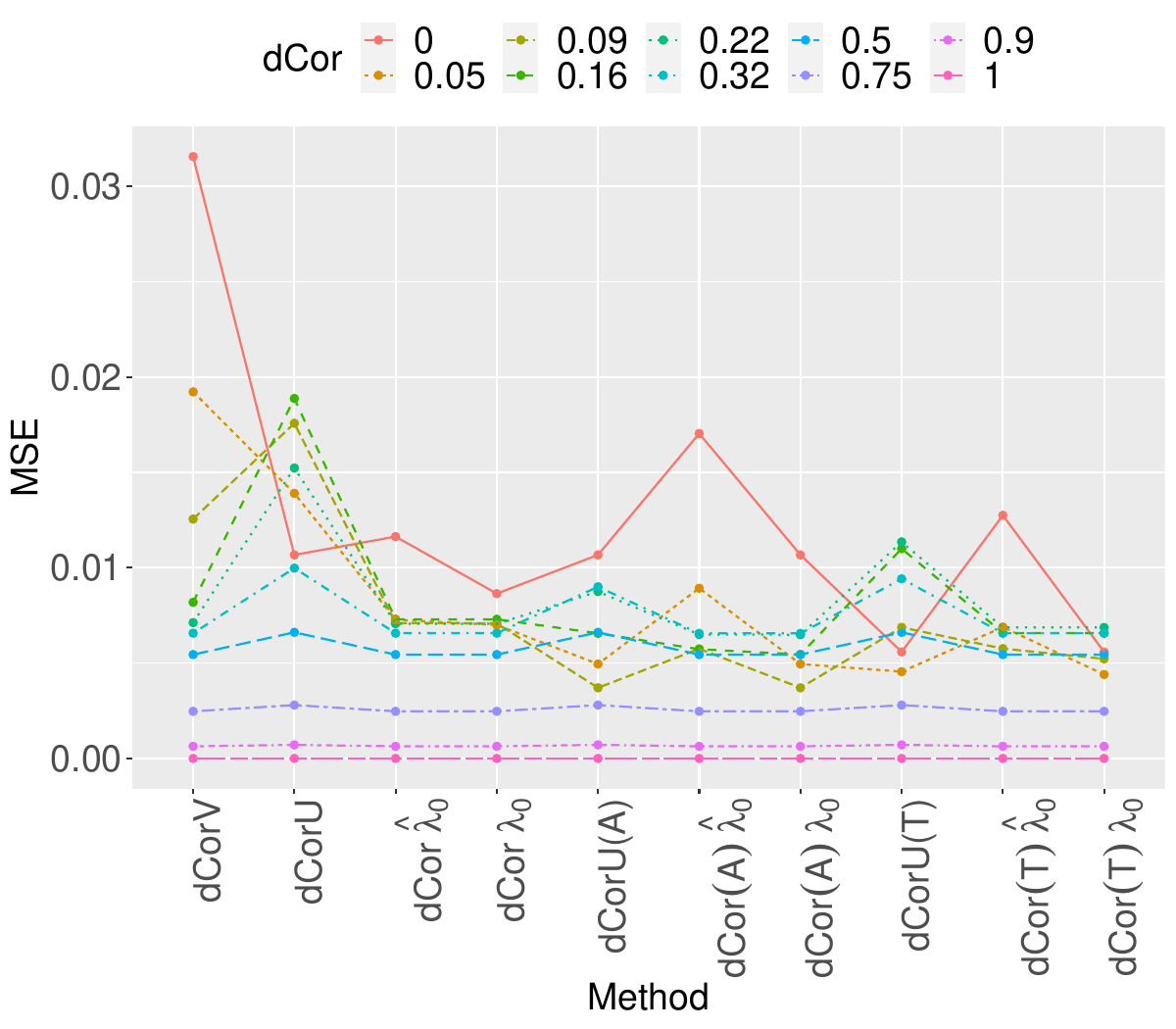}
			\caption{Bivariate normal model.}
			\label{fig:BN_dep}
		\end{subfigure} 
		\begin{subfigure}{0.5\textwidth}
			\includegraphics[width=\textwidth]{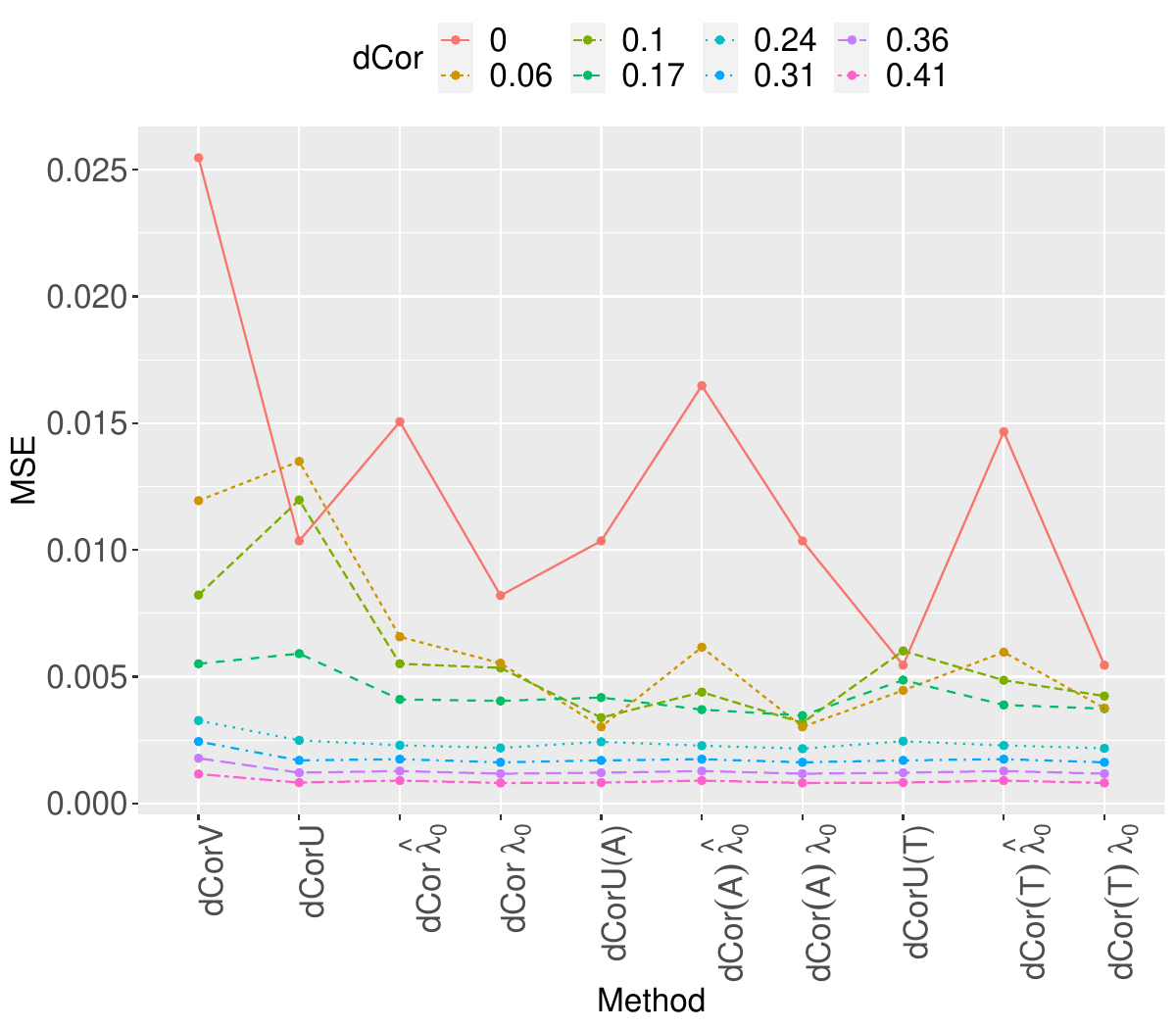}
			\caption{Non-linear model.}
			\label{fig:NLM_dep}
		\end{subfigure}
		\caption{Comparison of the MSE among the different estimators and their respective convex linear combinations performed for the provided models across various levels of dependence, from independence ($\dcor = 0$) to the strongest dependence supported by each model for $n = 100$.}
		\label{fig:Dep_results}
	\end{figure} 

	\section{Conclusions}\label{sec:Conclusions}
	
	This research addressed the problem of choosing the best method to estimate distance correlation between two vectors $X$ and $Y$.  Under independence, dCorU looks preferable over dCorV for all the models and sample sizes analyzed. However,  $\mathcal{U}_n^2(\boldsymbol{\X,\Y})$ might be negative in scenarios of independence or low dependence, particularly with small sample sizes. Consequently, this leads to an inability to calculate dCorU. This observation has led to the development of two alternative proposals, based on truncating or computing the absolute value of $\mathcal{U}_n^2(\boldsymbol{\X,\Y})$. Under independence, the superior estimation is in general when truncating.\\
	
	On the other hand, under dependence, the conclusions differ. The dCorV estimator aligns with the best results in terms of MSE for the linear models (FGM and bivariate normal). While, in the case of the nonlinear model, the dCorU estimator provides better estimates, however the differences are not relevant. Moreover, when considering the provided approaches, dCorU(A) and dCorU(T), the optimal estimator appeared to be dCorU(A), especially in cases of weak dependence. Now, in terms of computational time, both estimators, dCorU and dCorV, are similar. \\
	
	In practice, it is difficult to choose the best estimator for the distance correlation, as the choice depends on whether the relationship between $X$ and $Y$ is linear or no, and whether there is independence or not, questions than remain unknown in real-life studies. To address these complexities, a new estimator has been proposed involving the convex linear combination of the two estimators, dCorU and dCorV, as well as their respective extensions, dCorU(A) and dCorU(T). In the majority of the cases, the proposed estimator with the parameter $\lambda_0$ estimated using bootstrap, $\text{dCor}_{\hat{\lambda}_0}$, give better results than compared to using only dCorU or dCorV.  However, it is necessary to take into account that the computation time will increase as the number of bootstrap iterations increases. \\
	
	\bibliographystyle{refs-style}
	\bibliography{ref}

	\section*{Appendix Simulation study}\label{secA1}
		This appendix showcases the outcomes investigated and discussed in Section \ref{sec:simulation}. These results are illustrated through various plots depicting bias, variance, and Mean Squared Error (MSE) across the different comparisons explored. Additionally, numerical summaries for the mean, variance, bias, and MSE are provided in tabular format for each model across various scenarios.
		\subsection*{Bias and variance}
		
		This section presents the bias and variance results for each model across five levels of dependence and three sample sizes ($n = 100, 1000, 10000$). These results have been discussed in detail in Section 4 of the paper. Figure \ref{fig:FGM_var} illustrates the results obtained under the FGM model, while Figure \ref{fig:BN_var} depicts the outcomes under the bivariate normal model. Similarly, Figure \ref{fig:NLM_var} showcases the results for the nonlinear model. 
		\begin{figure}[htbp!]
			\centering
			\includegraphics[width=0.7\textwidth]{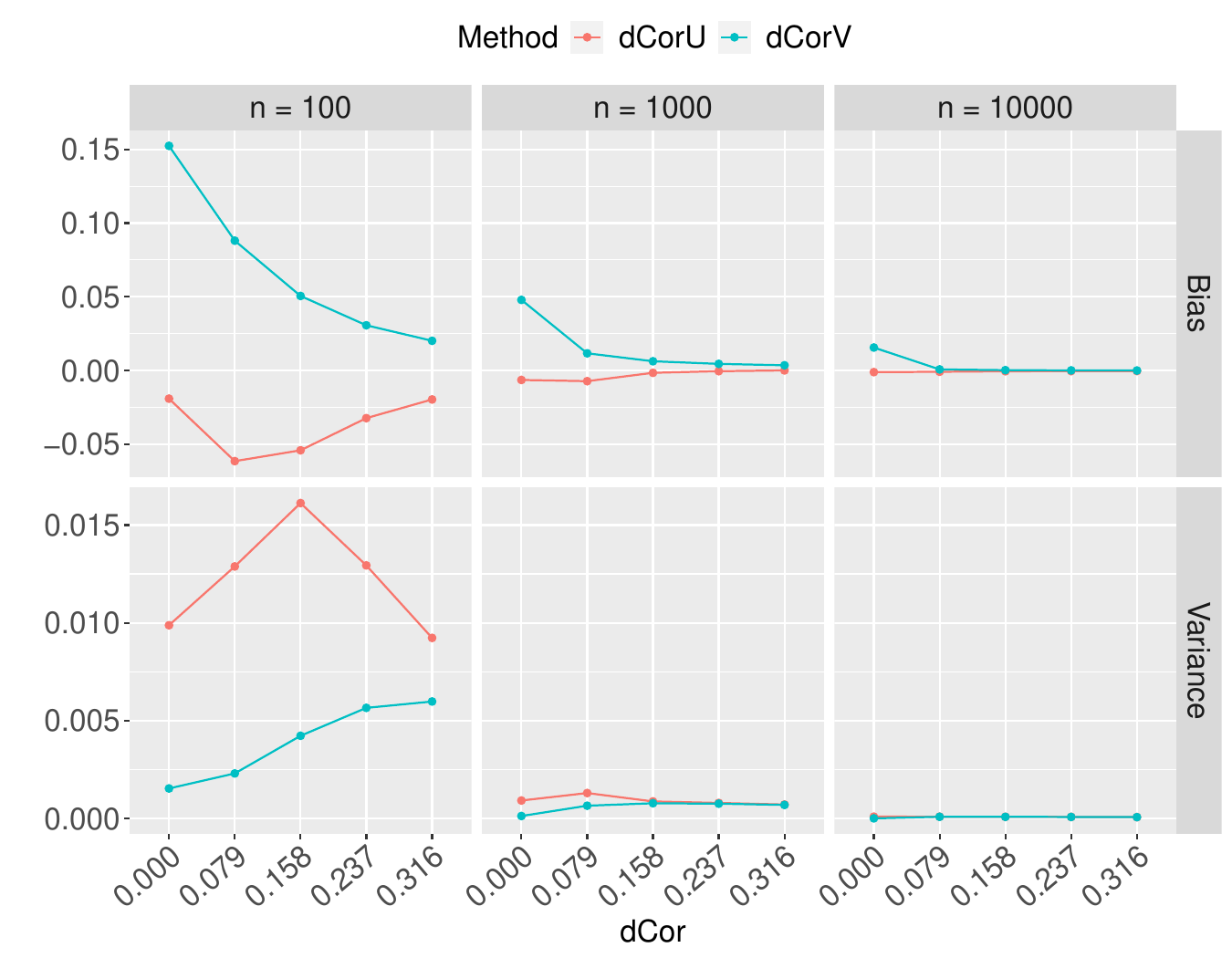}
			\caption{Bias and variance of dCorU and dCorV under the FGM-Model with different sample sizes ($n$) and distance correlation, from $\dcor = 0$ ($\theta = 0$) to $\dcor = 0.32$ ($\theta = 1$).}
			\label{fig:FGM_var}
		\end{figure}
		
		\begin{figure}[htbp!]
			\centering
			\includegraphics[width=0.7\textwidth]{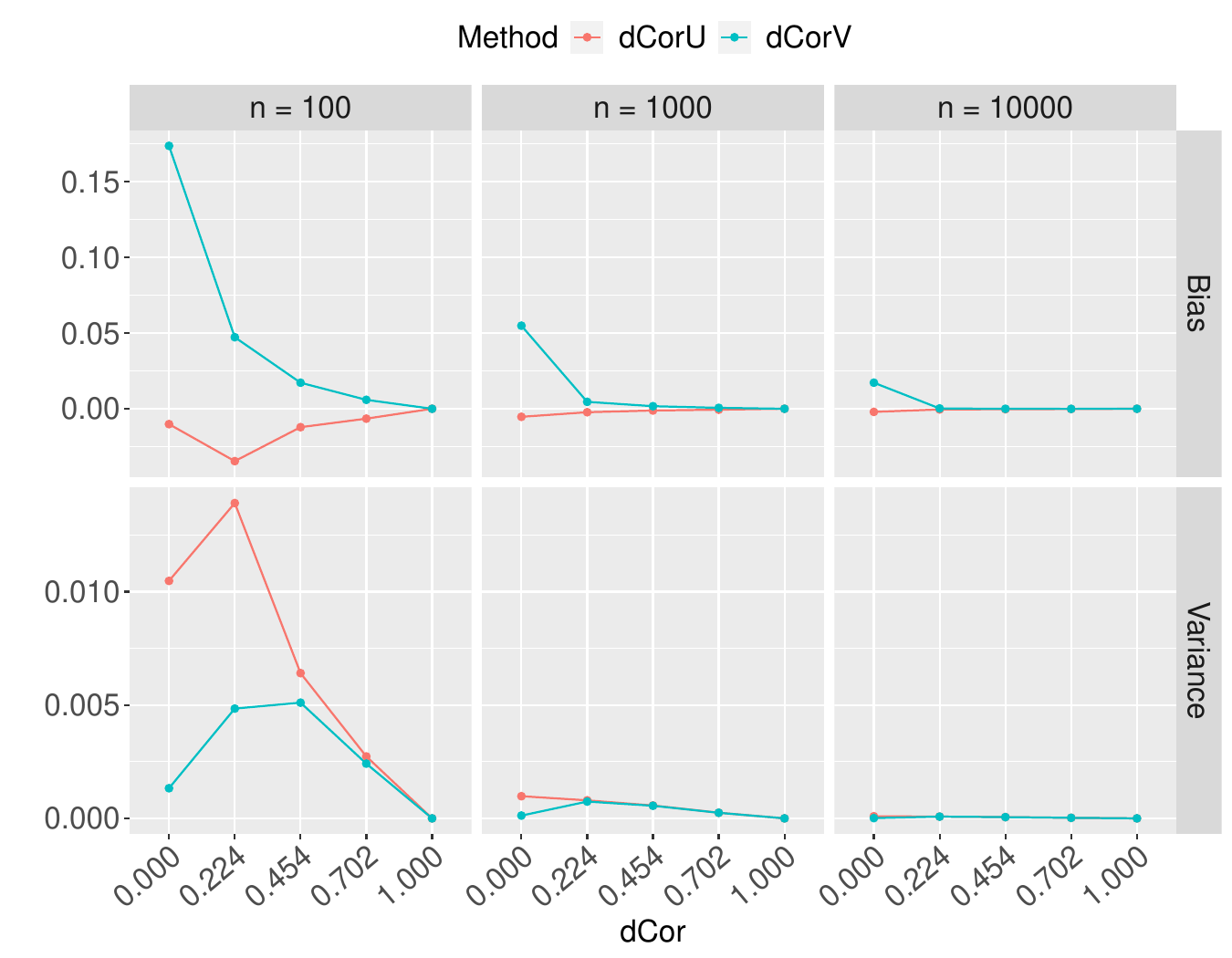}
			\caption{Bias and variance of dCorU and dCorV under bivariate normal model with different sample sizes ($n$) and distance correlation, from $\dcor = 0$ ($\rho = 0$) to $\dcor = 1$ ($\rho = 1$).}
			\label{fig:BN_var}
		\end{figure}
		
		\begin{figure}[htbp!]
			\centering
			\includegraphics[width=0.7\textwidth]{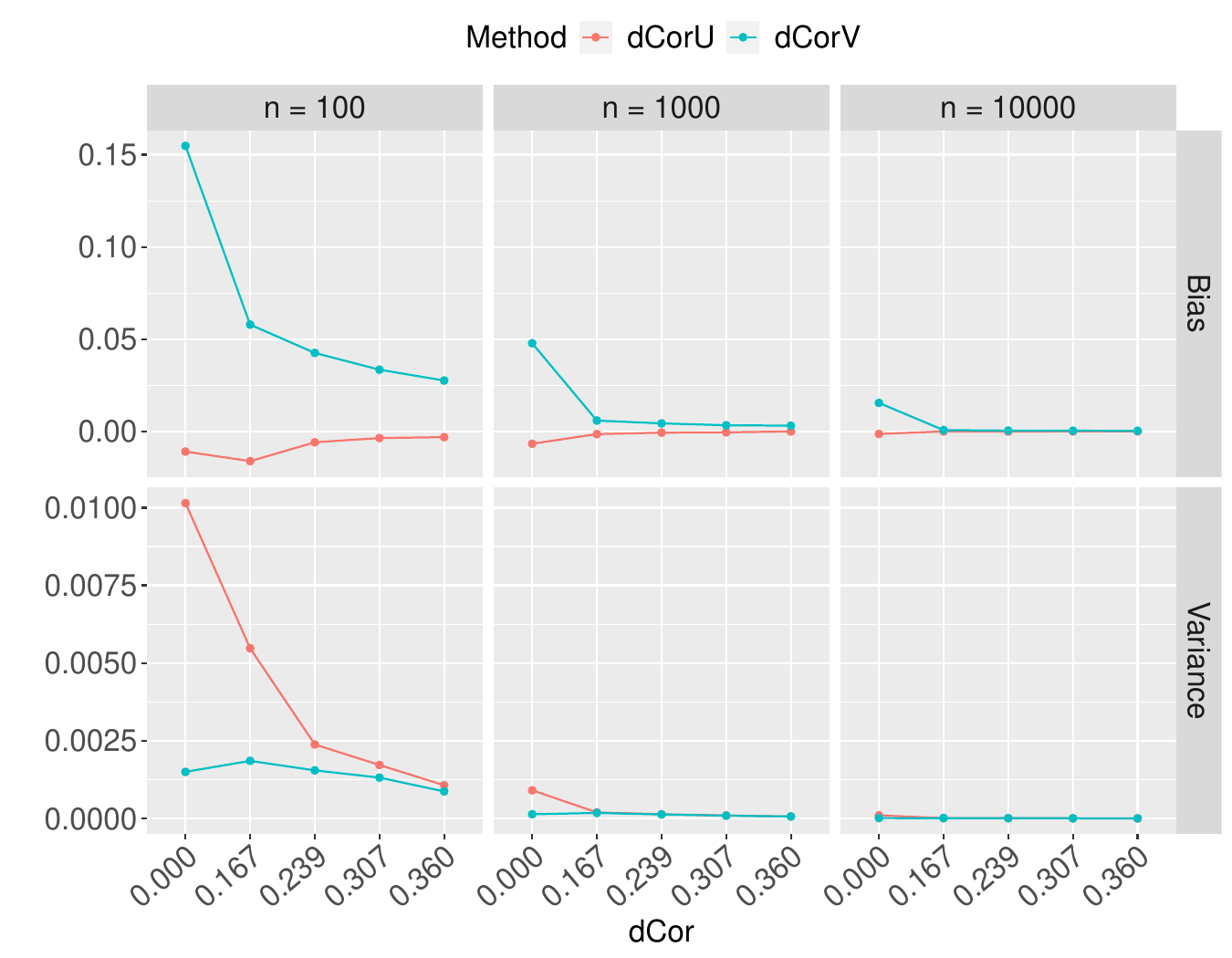}
			\caption{Bias and variance of dCorU and dCorV under the nonlinear model with different sample sizes ($n$) and distance correlation, from $\dcor = 0$ ($k = 0$) to $\dcor = 0.36$ ($k = 16$).}
			\label{fig:NLM_var}
		\end{figure}
		\newpage
		\subsection*{Convex linear combination}
		Figure \ref{fig:h} displays the MSE results for various bandwidth values ($h$ = 0.0025, 0.005, 0.01, 0.02, 0.04, 0.08, 0.16, 0.32) across different models and levels of dependence. The comparison was made through the three convex linear combinations: $\text{dCor}_{\hat{\lambda}_0}$, $\text{dCor(A)}_{\hat{\lambda}_0}$ and $\text{dCor(T)}_{\hat{\lambda}_0}$.\\
		\begin{figure}[htp]
			\centering
			\begin{subfigure}{1\textwidth}
				\includegraphics[width=\textwidth]{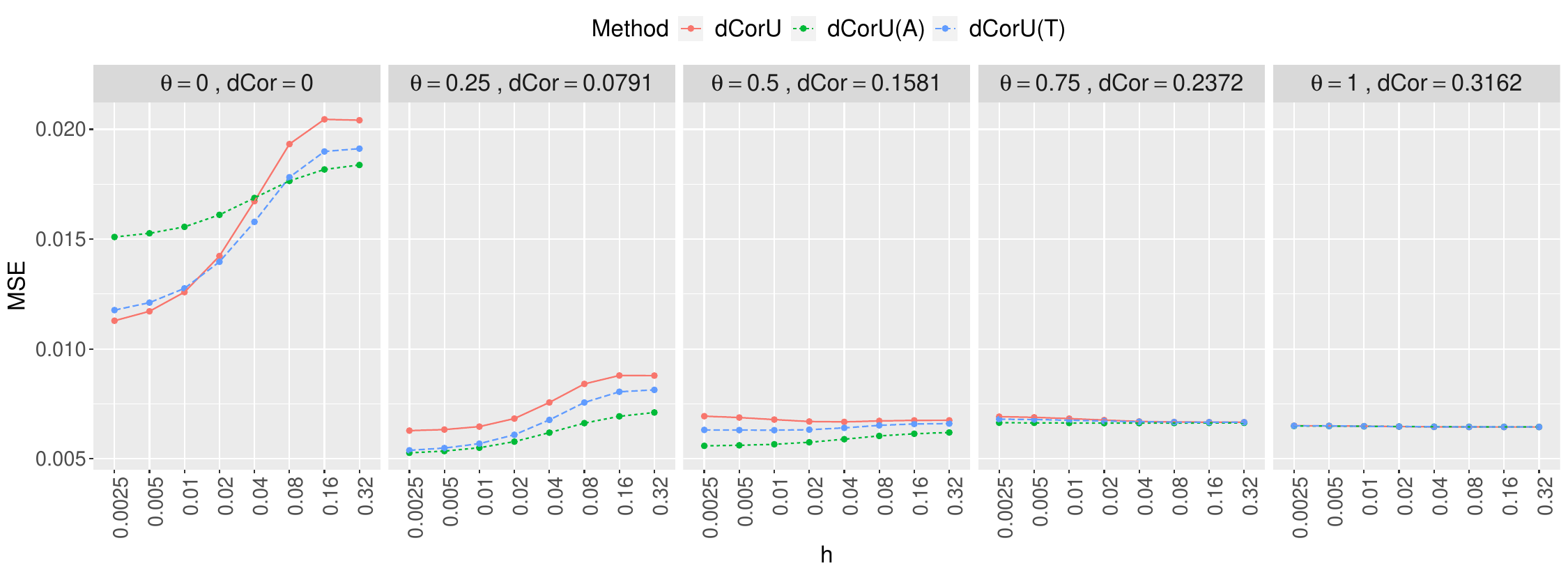}
				\caption{FGM-Model}
			\end{subfigure}
			\begin{subfigure}{1\textwidth}
				\includegraphics[width=1\textwidth]{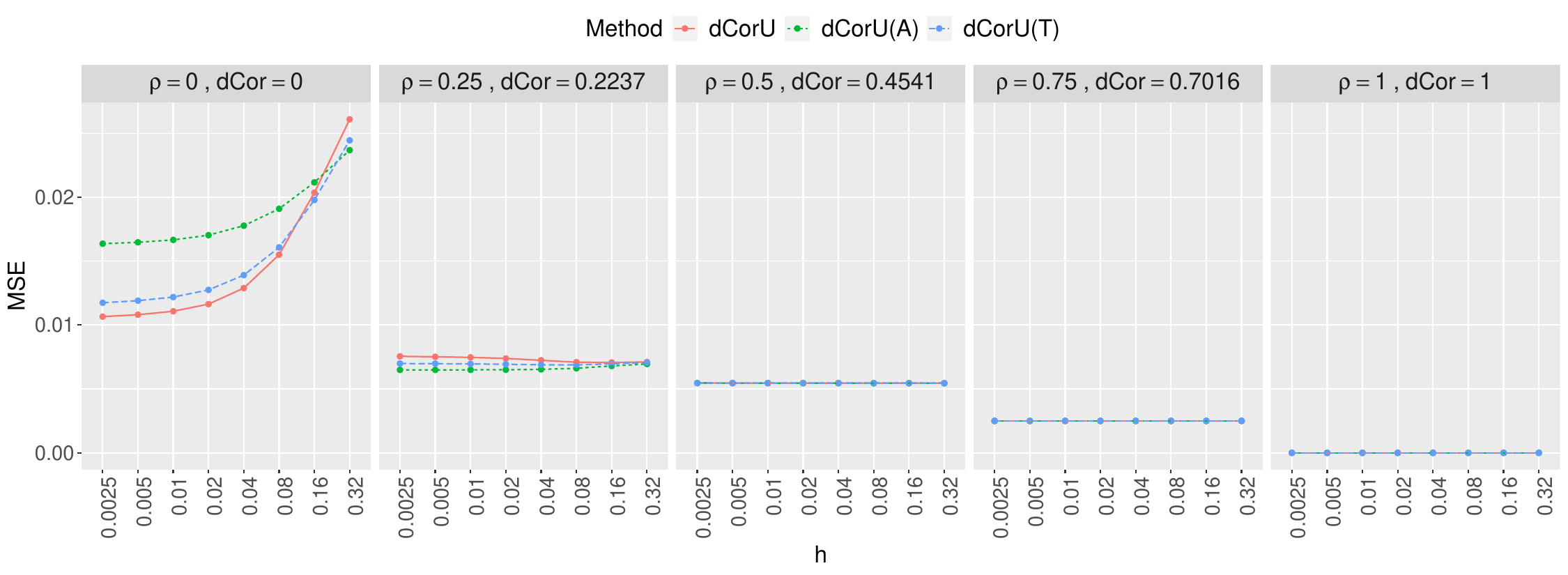}
				\caption{Bivariate normal model}
			\end{subfigure}
			\begin{subfigure}{1\textwidth}
				\includegraphics[width=\textwidth]{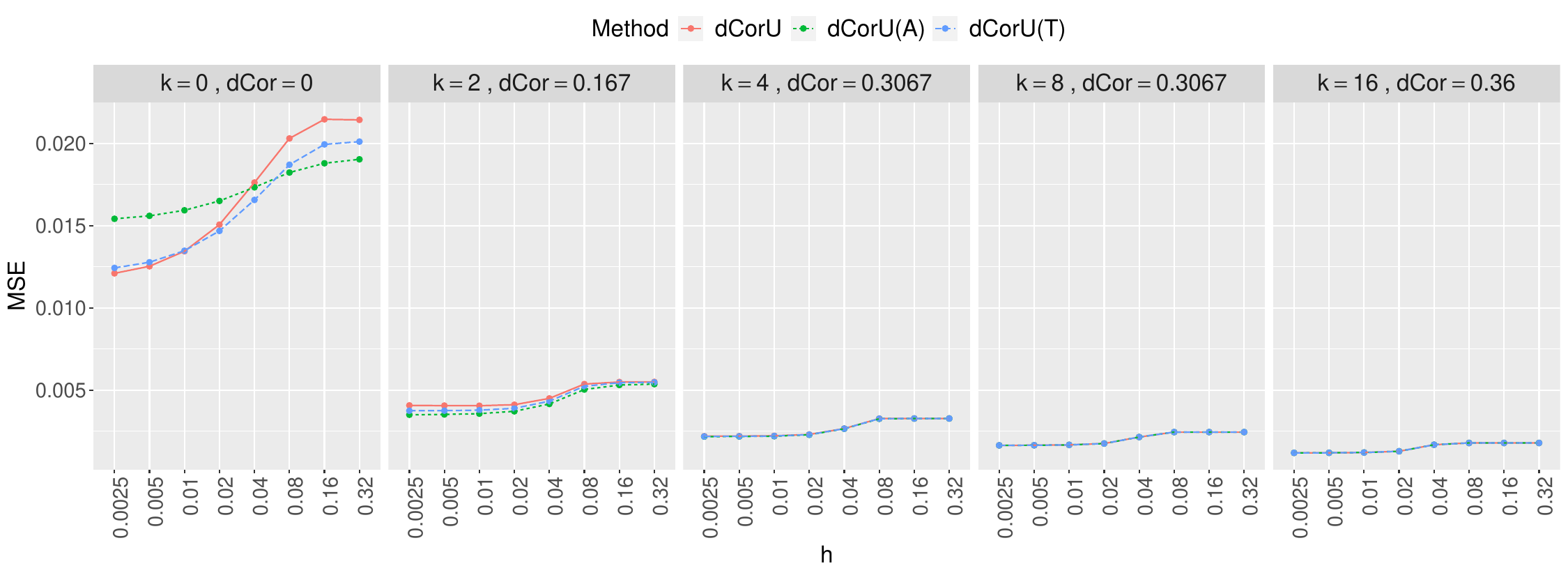}
				\caption{Nonlinear model}
			\end{subfigure}
			\caption{MSE values for the different bandwidths ($h$) across three models under different levels of dependence. The corresponding distance correlation values are also presented.}
			\label{fig:h}
		\end{figure}
		
		Figure \ref{fig:lambda_model} presents the MSE results for each model using different estimators, including the original estimators (dCorU, dCorV), alternatives to dCorU (dCorU(A), dCorU(T)), and their convex linear combinations. Specifically, $\text{dCor}_{\hat{\lambda}_0} = \hat{\lambda}_0$dCorU + $\left(1 - \hat{\lambda}_0\right)$dCorV with $\hat{\lambda}_0$ estimated using 1000 bootstrap replications. Similarly, $\text{dCor}_{\lambda_0} = \lambda_0$dCorU + $(1 - \lambda_0)$dCorV using the real optimal value of $\lambda_0$ computed with 1000 Monte Carlo repetitions. 
		\begin{figure}[htbp]
			\centering
			\includegraphics[width=1\textwidth]{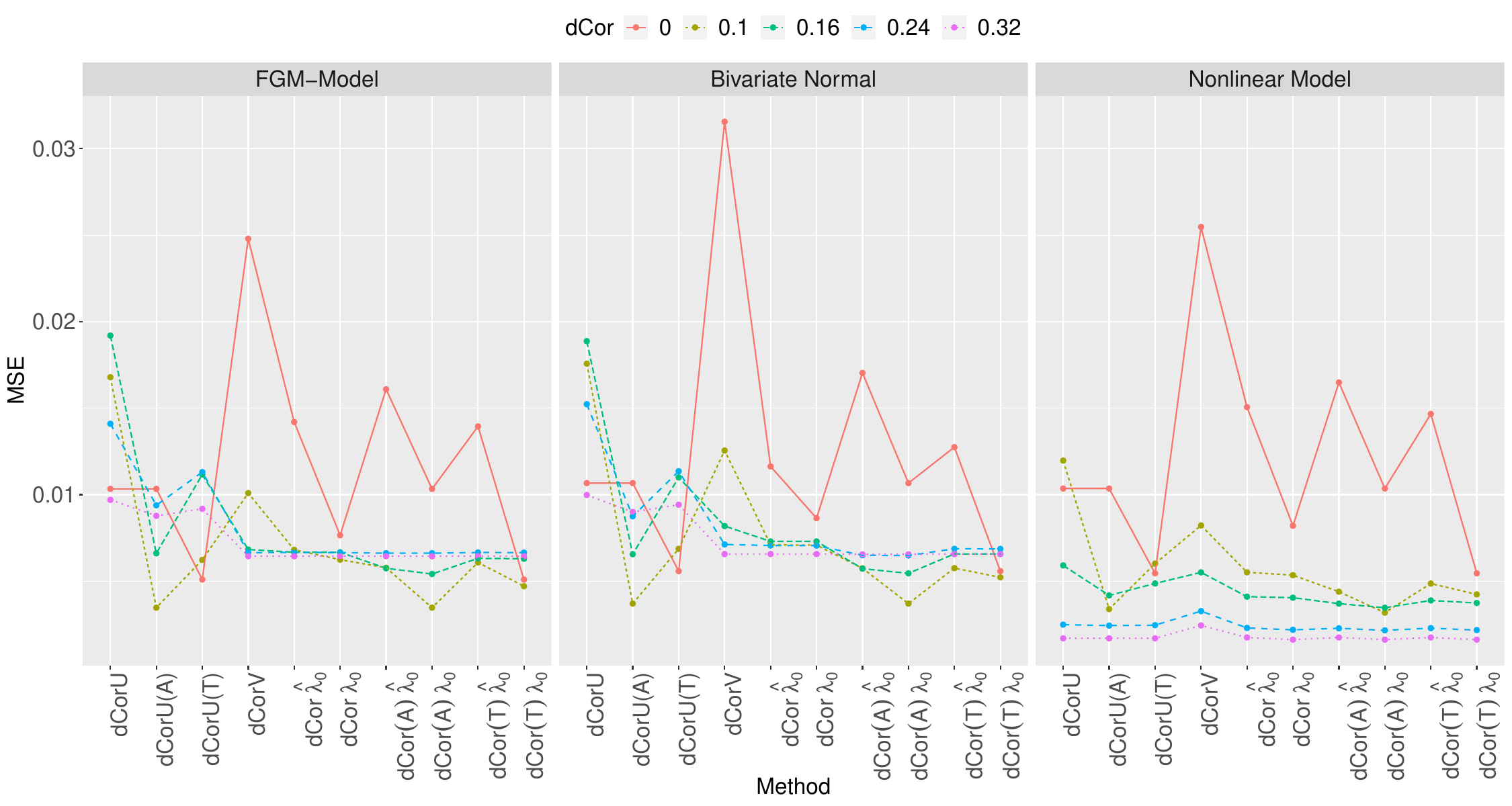}
			\caption{Comparison of the mean squared error (MSE) among the different estimators and their corresponding convex linear combinations for each model with a sample size of $n = 100$, along with the approximated distance correlation.} 
			\label{fig:lambda_model}
		\end{figure}
		
		\newpage
		\subsection*{Numerical results} 
		This appendix presents the complementary results of the simulation study. The tables show the mean, bias, variance, mean squared error (MSE) for each model with different parameters, using a sample size of $n$. \\ 
		
		For $n = 100$, the proposed dCorU(A) and dCorU(T) estimators are examined. Also the results for the convex linear combinations are provided. These tables include the mean, bias, variance and MSE for each combination, along with the bootstrap-estimated $\hat{\lambda}_0$ value. Since the real $\lambda_0$ can be obtained, the results for these combinations are also provided. This demonstrates the differences between the estimators of the distance correlation $\text{dCor}_{\lambda}$ with the bootstrap estimate $\hat{\lambda}_0$ and the real value $\lambda_0$.\\
		
		For $n = 1000$ and $n = 10000$, only results for the original estimators (dCorU, dCorV) are presented. Since the  negative values for these sample sizes are only observed under specific conditions and does not exhibit significant differences from the original estimators, then these results are not presented.\\ 
		\newpage
		\begin{table}[htbp!] 
			\begin{center}
				\begin{adjustbox}{width=1\textwidth}
					\begin{tabular}{c l  c  c  c c c c c c c c}
						\toprule
						\multicolumn{2}{c}{} & dCorU & dCorV & dCorU(A) & dCorU(T) &  ldCor & LdCor & ldCor(A) & LdCor(A) & ldCor(T) & LdCor(T) \\ 
						\cmidrule{2-12}
						\multicolumn{1}{c}{} & \multicolumn{1}{c}{$\hat{\lambda}_0 / \lambda_0$} & \multicolumn{2}{c}{} & \multicolumn{2}{ c }{} & \multicolumn{1}{ c}{$0.2737$} & \multicolumn{1}{ c}{$0.7167$} & \multicolumn{1}{ c}{$0.5065$} & \multicolumn{1}{ c}{$1$} & \multicolumn{1}{ c}{$0.3807$} &\multicolumn{1}{ c}{$1$} \\
						\cmidrule(rl){7-7} \cmidrule(rl){8-8} \cmidrule(rl){9-9} \cmidrule(rl){10-10} \cmidrule(rl){11-11} \cmidrule(rl){12-12}
						& Mean & -0.0191 & 0.1524 & 0.0923 & 0.0366 & 0.1055 & 0.0295 & 0.1220 & 0.0923 & 0.1083 & 0.0366 \\
						$\theta = 0$ & Bias & -0.0191 & 0.1524 & 0.0923 & 0.0366 & 0.1055 & 0.0295 & 0.1220 & 0.0923 & 0.1083 & 0.0366 \\ 
						$\dcor = 0$  & Var  & 0.0100  & 0.0016 & 0.0018 & 0.0038 & 0.0031 & 0.0068 & 0.0012 & 0.0018 & 0.0022 & 0.0038 \\
						& MSE  & 0.0103  & 0.0248 & 0.0103 & 0.0051 & 0.0142 & 0.0077 & 0.0161 & 0.0103 & 0.0140 & 0.0051 \\ 
						
						\bottomrule
						
						\multicolumn{1}{c}{} & \multicolumn{1}{c}{$\hat{\lambda}_0 / \lambda_0$} & \multicolumn{2}{c}{} & \multicolumn{2}{ c }{} & \multicolumn{1}{ c}{$0.2661$} & \multicolumn{1}{ c}{$0.5693$} & \multicolumn{1}{ c}{$0.4930$} & \multicolumn{1}{ c}{$1$} & \multicolumn{1}{ c}{$0.3705$} & \multicolumn{1}{ c}{$0.9538$} \\
						\cmidrule(rl){7-7} \cmidrule(rl){8-8} \cmidrule(rl){9-9}  \cmidrule(rl){10-10} \cmidrule(rl){11-11} \cmidrule(rl){12-12}
						& Mean & -0.0128 & 0.1546 & 0.0924 & 0.0398 & 0.1101 & 0.0593 & 0.1239 & 0.0924 & 0.1121 & 0.0451 \\
						$\theta = 0.1$      & Bias & -0.0444 & 0.1230 & 0.0608 & 0.0082 & 0.0785 & 0.0277 & 0.0923 & 0.0608 & 0.0805 & 0.0135 \\
						$\dcor = 0.032$     & Var  & 0.0103  & 0.0016 & 0.0020 & 0.0040 & 0.0032 & 0.0056 & 0.0013 & 0.0020 & 0.0023 & 0.0038 \\
						& MSE  & 0.0123  & 0.0168 & 0.0057 & 0.0041 & 0.0093 & 0.0064 & 0.0098 & 0.0057 & 0.0088 & 0.0040 \\ 
						\bottomrule
						
						\multicolumn{1}{c}{} & \multicolumn{1}{c}{$\hat{\lambda}_0 / \lambda_0$} & \multicolumn{2}{c}{} & \multicolumn{2}{ c }{} & \multicolumn{1}{ c}{$0.2312$} & \multicolumn{1}{ c}{$0.3766$} & \multicolumn{1}{ c}{$0.4327$} & \multicolumn{1}{ c}{$1$} & \multicolumn{1}{ c}{$0.3233$} & \multicolumn{1}{ c}{$0.6531$} \\
						\cmidrule(rl){7-7} \cmidrule(rl){8-8} \cmidrule(rl){9-9} \cmidrule(rl){10-10} \cmidrule(rl){11-11} \cmidrule(rl){12-12} 
						& Mean & 0.0175  & 0.1672 & 0.1018 & 0.0597  & 0.1326 & 0.1108 & 0.1389 & 0.1018 & 0.1324 & 0.0969 \\
						$\theta = 0.25$    & Bias & -0.0615 & 0.0881 & 0.0227 & -0.0194 & 0.0535 & 0.0318 & 0.0598 & 0.0227 & 0.0533 & 0.0179 \\
						$\dcor = 0.079$    & Var  & 0.0130  & 0.0023 & 0.0030 & 0.0059  & 0.0040 & 0.0524 & 0.0022 & 0.0030 & 0.0032 & 0.0044 \\
						& MSE  & 0.0168  & 0.0101 & 0.0035 & 0.0062  & 0.0068 & 0.0063 & 0.0058 & 0.0035 & 0.0061 & 0.0047 \\  
						\bottomrule
						
						\multicolumn{1}{c}{} & \multicolumn{1}{c}{$\hat{\lambda}_0 / \lambda_0$} & \multicolumn{2}{c}{} & \multicolumn{2}{ c }{} & \multicolumn{1}{ c}{$0.0844$} & \multicolumn{1}{ c}{$0.1006$} & \multicolumn{1}{ c}{$0.2692$} & \multicolumn{1}{ c}{$0.5212$} & \multicolumn{1}{ c}{$0.1973$} & \multicolumn{1}{ c}{$0.2479$} \\
						\cmidrule(rl){7-7} \cmidrule(rl){8-8} \cmidrule(rl){9-9} \cmidrule(rl){10-10} \cmidrule(rl){11-11} \cmidrule(rl){12-12}
						& Mean & 0.1040  & 0.2087 & 0.1438  & 0.1239  & 0.1999 & 0.1982 & 0.1912 & 0.1749 & 0.1920 & 0.0969 \\
						$\theta = 0.5$   & Bias & -0.0541 & 0.0506 & -0.0143 & -0.0342 & 0.0417 & 0.0400 & 0.0331 & 0.0167 & 0.0339 & 0.0179 \\ 
						$\dcor = 0.158$  & Var  & 0.0163  & 0.0043 & 0.0064  & 0.0100  & 0.0049 & 0.0051 & 0.0047 & 0.0051 & 0.0052 & 0.0044 \\
						& MSE  & 0.0192  & 0.0068 & 0.0066  & 0.0112  & 0.0067 & 0.0067 & 0.0058 & 0.0054 & 0.0063 & 0.0047 \\ 
						\bottomrule
						
						\multicolumn{1}{c}{} & \multicolumn{1}{c}{$\hat{\lambda}_0 / \lambda_0$} & \multicolumn{2}{c}{} & \multicolumn{2}{ c }{} & \multicolumn{1}{ c}{$ 0.0078$} & \multicolumn{1}{ c}{$0$} & \multicolumn{1}{ c}{$0.0850$} & \multicolumn{1}{ c}{$0.0970$} & \multicolumn{1}{ c}{$0.0158$}  & \multicolumn{1}{ c}{$0$} \\
						\cmidrule(rl){7-7} \cmidrule(rl){8-8} \cmidrule(rl){9-9} \cmidrule(rl){10-10} \cmidrule(rl){11-11} \cmidrule(rl){12-12}
						& Mean & 0.2049  & 0.2679 & 0.2148  & 0.2098  & 0.2674 & 0.2679 & 0.2634 & 0.2627 & 0.2670 & 0.2679 \\
						$\theta = 0.75$    & Bias & -0.0323 & 0.0307 & -0.0224 & -0.0274 & 0.0302 & 0.0307 & 0.0262 & 0.0255 & 0.0298 & 0.0307 \\
						$\dcor = 0.237$    & Var  & 0.0131  & 0.0057 & 0.0089  & 0.0106  & 0.0058 & 0.0057 & 0.0059 & 0.0060 & 0.0058 & 0.0057 \\
						& MSE  & 0.0141  & 0.0067 & 0.0094  & 0.0113  & 0.0067 & 0.0067 & 0.0066 & 0.0066 & 0.0067 & 0.0067 \\ 
						\bottomrule
						
						\multicolumn{1}{c}{} & \multicolumn{1}{c}{$\hat{\lambda}_0 / \lambda_0$} & \multicolumn{2}{c}{} & \multicolumn{2}{ c }{} & \multicolumn{1}{ c}{$0.0009$} & \multicolumn{1}{ c}{$0$} & \multicolumn{1}{ c}{$0.0055$} & \multicolumn{1}{ c}{$0$} & \multicolumn{1}{ c}{$0.0019$} & \multicolumn{1}{ c}{$0$} \\
						\cmidrule(rl){7-7} \cmidrule(rl){8-8} \cmidrule(rl){9-9} \cmidrule(rl){10-10} \cmidrule(rl){11-11} \cmidrule(rl){12-12}
						& Mean & 0.2966  & 0.3364 & 0.2980  & 0.2973  & 0.3364 & 0.3364 & 0.3362 & 0.3364 & 0.3363 & 0.3364 \\
						$\theta = 1$     & Bias & -0.0197 & 0.0202 & -0.0182 & -0.0189 & 0.0201 & 0.0202 & 0.0200 & 0.0202 & 0.0201 & 0.0202 \\
						$\dcor = 0.316$  & Var  & 0.0093  & 0.0060 & 0.0085  & 0.0088  & 0.0060 & 0.0060 & 0.0061 & 0.0060 & 0.0060 & 0.0060 \\
						& MSE  & 0.0097  & 0.0065 & 0.0088  & 0.0092  & 0.0065 & 0.0065 & 0.0065 & 0.0065 & 0.0065 & 0.0065 \\
						\bottomrule
					\end{tabular} 
				\end{adjustbox}
			\end{center}
			\caption{Results under the FGM-Model for different values of $\theta$ and $\dcor$. Specifically, ldCor denotes the combination $\text{dCor}_{\hat{\lambda}_0} = \hat{\lambda}_0$dCorU + $\left(1 - \hat{\lambda}_0\right)$dCorV with $\hat{\lambda}_0$ obtained through bootstrap. Similarly for dCorU(A) and dCorU(T). In the same way LdCor represents $\text{dCor}_{\lambda_0} = \lambda_0$dCorU + $\left(1 - \lambda_0\right)$dCorV, with $\lambda_0$ approximated using Monte Carlo, similarly for dCorU(A) and dCorU(T) with $n = 100$.}
		\end{table}
		\newpage
		\begin{table}[htbp!] 
			\begin{center}
				\begin{adjustbox}{width=1\textwidth}
					\small
					\begin{tabular}{c l  c  c  c c c c c c c c}
						\toprule
						\multicolumn{2}{c}{} & dCorU & dCorV & dCorU(A) & dCorU(T) &  ldCor &  LdCor & ldCor(A) &  LdCor(A)  & ldCor(T) &  LdCor(T) \\ 
						\cmidrule{2-12}
						\multicolumn{1}{c}{} & \multicolumn{1}{c}{$\hat{\lambda}_0 / \lambda_0$} & \multicolumn{2}{c}{} & \multicolumn{2}{ c }{} & \multicolumn{1}{ c}{$0.4923$} & \multicolumn{1}{ c}{$0.7707$} & \multicolumn{1}{ c}{$0.5982$} & \multicolumn{1}{ c}{$1$} & \multicolumn{1}{ c}{$0.5486$} & \multicolumn{1}{ c}{$1$} \\
						\cmidrule(rl){7-7} \cmidrule(rl){8-8} \cmidrule(rl){9-9} \cmidrule(rl){10-10} \cmidrule(rl){11-11} \cmidrule(rl){12-12}
						& Mean & -0.0102 & 0.1738 & 0.0936 & 0.0417 & 0.0832 & 0.0320 & 0.1258 & 0.0936 & 0.1014 & 0.0417 \\
						$\rho = 0$  & Bias & -0.0102 & 0.1738 & 0.0936 & 0.0417 & 0.0832 & 0.0320 & 0.1258 & 0.0936 & 0.1014 & 0.0417 \\
						$\dcor = 0$ & Var  & 0.0106  & 0.0013 & 0.0019 & 0.0038 & 0.0047 & 0.0076 & 0.0012 & 0.0019 & 0.0025 & 0.0038 \\
						& MSE  & 0.0107  & 0.0316 & 0.0107 & 0.0056 & 0.0116 & 0.0086 & 0.0170 & 0.0107 & 0.0128 & 0.0056 \\  
						\bottomrule
						
						\multicolumn{1}{c}{} & \multicolumn{1}{c}{$\hat{\lambda}_0 / \lambda_0$} & \multicolumn{2}{c}{} & \multicolumn{2}{ c }{} & \multicolumn{1}{ c}{$0.3968$} & \multicolumn{1}{ c}{$0.4193$} & \multicolumn{1}{ c}{$0.5359$} & \multicolumn{1}{ c}{$1$} & \multicolumn{1}{ c}{$0.4950$} & \multicolumn{1}{ c}{$0.6782$}\\
						\cmidrule(rl){7-7} \cmidrule(rl){8-8} \cmidrule(rl){9-9}  \cmidrule(rl){10-10} \cmidrule(rl){11-11} \cmidrule(rl){12-12}
						& Mean & 0.0310  & 0.1906 & 0.1088 & 0.0699  & 0.1273 & 0.1237 & 0.1468 & 0.1088 & 0.1309 & 0.1087  \\
						$\rho = 0.1$     & Bias & -0.0581 & 0.1014 & 0.0197 & -0.0192 & 0.0381 & 0.0346 & 0.0576 & 0.0197 & 0.0417 & 0.0196 \\
						$\dcor = 0.089$  & Var  & 0.0142  & 0.0026 & 0.0033 & 0.0065  & 0.0057 & 0.0059 & 0.0024 & 0.0033 & 0.0040 & 0.0048 \\
						& MSE  & 0.0176  & 0.0126 & 0.0037 & 0.0069  & 0.0071 & 0.0071 & 0.0058 & 0.0037 & 0.0058 & 0.0052 \\  
						\bottomrule
						
						\multicolumn{1}{c}{} & \multicolumn{1}{c}{$\hat{\lambda}_0 / \lambda_0$} & \multicolumn{2}{c}{} & \multicolumn{2}{ c }{} & \multicolumn{1}{ c}{$0.2228$} & \multicolumn{1}{ c}{$0.2169$} & \multicolumn{1}{ c}{$0.4191$} & \multicolumn{1}{ c}{$0.6113$} & \multicolumn{1}{ c}{$0.3914$} & \multicolumn{1}{ c}{$0.3764$} \\
						\cmidrule(rl){7-7} \cmidrule(rl){8-8} \cmidrule(rl){9-9} \cmidrule(rl){10-10} \cmidrule(rl){11-11} \cmidrule(rl){12-12}
						& Mean & 0.1099  & 0.2272 & 0.1482  & 0.1290  & 0.2011 & 0.2018 & 0.1941 & 0.1789 & 0.1888 & 0.1903 \\
						$\rho = 0.18$    & Bias & -0.0508 & 0.0665 & -0.0125 & -0.0317 & 0.0404 & 0.0411 & 0.0334 & 0.0182 & 0.0281 & 0.0296 \\
						$\dcor = 0.158$  & Var  &  0.0163 & 0.0038 & 0.0064  & 0.0100  & 0.0057 & 0.0056 & 0.0046 & 0.0051 & 0.0058 & 0.0057 \\
						& MSE  & 0.0189  & 0.0082 & 0.0066  & 0.0110  & 0.0073 & 0.0073 & 0.0057 & 0.0055 & 0.0066 & 0.0066 \\ 
						
						\bottomrule
						
						\multicolumn{1}{c}{} & \multicolumn{1}{c}{$\hat{\lambda}_0 / \lambda_0$} & \multicolumn{2}{c}{} & \multicolumn{2}{ c }{} & \multicolumn{1}{ c}{$0.0512$} & \multicolumn{1}{ c}{$0.0829$} & \multicolumn{1}{ c}{$0.3009$} & \multicolumn{1}{ c}{$0.3462$} & \multicolumn{1}{ c}{$0.1655$} & \multicolumn{1}{ c}{$0.1920$} \\
						\cmidrule(rl){7-7} \cmidrule(rl){8-8} \cmidrule(rl){9-9} \cmidrule(rl){10-10} \cmidrule(rl){11-11} \cmidrule(rl){12-12}
						& Mean & 0.1891  & 0.2710 & 0.2035  & 0.1963  & 0.2668 & 0.2642 & 0.2507 & 0.2477 & 0.2587 & 0.2567 \\
						$\rho = 0.25$    & Bias & -0.0346 & 0.0473 & -0.0202 & -0.0274 & 0.0431 & 0.0405 & 0.0270 & 0.0240 & 0.0350 & 0.0330 \\
						$\dcor = 0.224$  & Var  & 0.0140  & 0.0049 & 0.0084  & 0.0106  & 0.0052 & 0.0054 & 0.0058 & 0.0059 & 0.0057 & 0.0058 \\
						& MSE  & 0.0152  & 0.0071 & 0.0088  & 0.0114  & 0.0071 & 0.0071 & 0.0065 & 0.0065 & 0.0069 & 0.0069 \\  
						\bottomrule
						
						\multicolumn{1}{c}{} & \multicolumn{1}{c}{$\hat{\lambda}_0 / \lambda_0$} & \multicolumn{2}{c}{} & \multicolumn{2}{ c }{} & \multicolumn{1}{ c}{$0.0010$} & \multicolumn{1}{ c}{$0$} & \multicolumn{1}{ c}{$0.0660$} & \multicolumn{1}{ c}{$0.0658$} & \multicolumn{1}{ c}{$0.0157$} & \multicolumn{1}{ c}{$0.0190$} \\
						\cmidrule(rl){7-7} \cmidrule(rl){8-8} \cmidrule(rl){9-9} \cmidrule(rl){10-10} \cmidrule(rl){11-11} \cmidrule(rl){12-12}
						& Mean & 0.2948  & 0.3453 & 0.2964  & 0.2956  & 0.3452 & 0.3453 & 0.3401 & 0.3421 & 0.3445 & 0.3444 \\
						$\rho = 0.35$ 	 & Bias & -0.0198 & 0.0307 & -0.0183 & -0.0191 & 0.0306 & 0.0307 & 0.0274 & 0.0275 & 0.0299 & 0.0297 \\ 
						$\dcor = 0.315$  & Var  & 0.0096  & 0.0056 & 0.0087  & 0.0091  & 0.0056 & 0.0056 & 0.0058 & 0.0058 & 0.0057 & 0.0057 \\
						& MSE  & 0.0100  & 0.0066 & 0.0090  & 0.0094  & 0.0066 & 0.0066 & 0.0066 & 0.0066 & 0.0066 & 0.0066 \\ 
						\bottomrule
						
						\multicolumn{1}{c}{} & \multicolumn{1}{c}{$\hat{\lambda}_0 / \lambda_0$} & \multicolumn{2}{c}{} & \multicolumn{2}{ c }{} & \multicolumn{1}{ c}{$0$} & \multicolumn{1}{ c}{$0$} & \multicolumn{1}{ c}{$0$} & \multicolumn{1}{ c}{$0$} & \multicolumn{1}{ c}{$0$} & \multicolumn{1}{ c}{$0$} \\
						\cmidrule(rl){7-7} \cmidrule(rl){8-8} \cmidrule(rl){9-9} \cmidrule(rl){10-10} \cmidrule(rl){11-11} \cmidrule(rl){12-12}
						& Mean & 0.4420  & 0.4713 & 0.4420  & 0.4420  & 0.4713 & 0.4713 & 0.4713 & 0.4713 & 0.4713 & 0.4713 \\
						$\rho = 0.5$ 	 & Bias & -0.0122 & 0.0172 & -0.0122 & -0.0122 & 0.0172 & 0.0172 & 0.0172 & 0.0172 & 0.0172 & 0.0172 \\ 
						$\dcor = 0.454$  & Var  & 0.0065  & 0.0052 & 0.0065  & 0.0065  & 0.0052 & 0.0052 & 0.0052 & 0.0052 & 0.0052 & 0.0052 \\
						& MSE  & 0.0066  & 0.0055 & 0.0066  & 0.0066  & 0.0055 & 0.0055 & 0.0055 & 0.0055 & 0.0055 & 0.0055 \\ 
						\bottomrule
						
						\multicolumn{1}{c}{} & \multicolumn{1}{c}{$\hat{\lambda}_0 / \lambda_0$} & \multicolumn{2}{c}{} & \multicolumn{2}{ c }{} & \multicolumn{1}{ c}{$0$} & \multicolumn{1}{ c}{$0$} & \multicolumn{1}{ c}{$0$} & \multicolumn{1}{ c}{$0$} & \multicolumn{1}{ c}{$0$} & \multicolumn{1}{ c}{$0$} \\
						\cmidrule(rl){7-7} \cmidrule(rl){8-8} \cmidrule(rl){9-9} \cmidrule(rl){10-10} \cmidrule(rl){11-11} \cmidrule(rl){12-12}
						& Mean & 0.6951  & 0.7075 & 0.6951  & 0.6951  & 0.7075 & 0.7075 & 0.7075 & 0.7075 & 0.7075 & 0.7075 \\
						$\rho = 0.75$ 	 & Bias & -0.0066 & 0.0059 & -0.0066 & -0.0066 & 0.0059 & 0.0059 & 0.0059 & 0.0059 & 0.0059 & 0.0059 \\ 
						$\dcor = 0.702$  & Var  & 0.0028  & 0.0024 & 0.0028  & 0.0028  & 0.0024 & 0.0024 & 0.0024 & 0.0024 & 0.0024 & 0.0024 \\
						& MSE  & 0.0028  & 0.0025 & 0.0028  & 0.0028  & 0.0025 & 0.0025 & 0.0025 & 0.0025 & 0.0025 & 0.0025 \\ 
						\bottomrule
						
						\multicolumn{1}{c}{} & \multicolumn{1}{c}{$\hat{\lambda}_0 / \lambda_0$} & \multicolumn{2}{ c }{} & \multicolumn{1}{ c}{ } & \multicolumn{1}{ c}{ } & \multicolumn{1}{ c}{$0$} & \multicolumn{1}{ c}{$0$} & \multicolumn{1}{ c}{$0$} & \multicolumn{1}{ c}{$0$} & \multicolumn{1}{ c}{$0$} & \multicolumn{1}{ c}{$0$} \\
						\cmidrule(rl){7-7} \cmidrule(rl){8-8} \cmidrule(rl){9-9} \cmidrule(rl){10-10} \cmidrule(rl){11-11} \cmidrule(rl){12-12} 
						& Mean & 1.0000 & 1.0000 & 1.0000 & 1.0000 & 1.0000 & 1.0000 & 1.0000 & 1.0000 & 1.0000 & 1.0000 \\
						$\rho = 1$  & Bias & 0.0000 & 0.0000 & 0.0000 & 0.0000 & 0.0000 & 0.0000 & 0.0000 & 0.0000 & 0.0000 & 0.0000 \\
						$\dcor = 1$ & Var  & 0.0000 & 0.0000 & 0.0000 & 0.0000 & 0.0000 & 0.0000 & 0.0000 & 0.0000 & 0.0000 & 0.0000 \\
						& MSE  & 0.0000 & 0.0000 & 0.0000 & 0.0000 & 0.0000 & 0.0000 & 0.0000 & 0.0000 & 0.0000 & 0.0000 \\  
						\bottomrule
					\end{tabular} 
				\end{adjustbox}
			\end{center}
			\caption{Results under the bivariate normal model for different values of $\theta$ and $\dcor$. Specifically, ldCor denotes the combination $\text{dCor}_{\hat{\lambda}_0} = \hat{\lambda}_0$dCorU + $\left(1 - \hat{\lambda}_0\right)$dCorV with $\hat{\lambda}_0$ obtained through bootstrap. Similarly for dCorU(A) and dCorU(T). In the same way LdCor represents $\text{dCor}_{\lambda_0} = \lambda_0$dCorU + $\left(1 - \lambda_0\right)$dCorV,  with $\lambda_0$ approximated using Monte Carlo, similarly for dCorU(A) and dCorU(T) with $n = 100$.}
		\end{table}
		\newpage
		\begin{table}[htbp!] 
			\begin{center}
				\begin{adjustbox}{width=1\textwidth}
					\small
					\begin{tabular}{c l  c  c  c c c c c c c c}
						\toprule
						\multicolumn{2}{c}{} & dCorU & dCorV & dCorU(A) & dCorU(T) &  ldCor &  LdCor & ldCor(A) &  LdCor(A) & ldCor(T) &  LdCor(T )\\ 
						\cmidrule{2-12}
						\multicolumn{1}{c}{} & \multicolumn{1}{c}{$\hat{\lambda}_0 / \lambda_0$} & \multicolumn{2}{c}{} & \multicolumn{2}{ c }{} & \multicolumn{1}{ c}{$0.2734$} & \multicolumn{1}{ c}{$0.7391$} & \multicolumn{1}{ c}{$0.5017$} & \multicolumn{1}{ c}{$1$} & \multicolumn{1}{ c}{$0.3787$} & \multicolumn{1}{ c}{$1$}  \\
						\cmidrule(rl){7-7} \cmidrule(rl){8-8} \cmidrule(rl){9-9} \cmidrule(rl){10-10} \cmidrule(rl){11-11} \cmidrule(rl){12-12}
						& Mean & -0.0109 & 0.1548 & 0.0932 & 0.0411 & 0.1095 & 0.0323 & 0.1239 & 0.0932 & 0.1117 & 0.0411 \\
						$k = 0$ 	& Bias & -0.0109 & 0.1548 & 0.0932 & 0.0411 & 0.1095 & 0.0323 & 0.1239 & 0.0932 & 0.1117 & 0.0411 \\
						$\dcor = 0$ & Var  & 0.0102  & 0.0015 & 0.0017 & 0.0038 & 0.0031 & 0.0072 & 0.0011 & 0.0017 & 0.0021 & 0.0038 \\
						& MSE  & 0.01036 & 0.0255 & 0.0104 & 0.0055 & 0.0151 & 0.0082 & 0.0165 & 0.0104 & 0.0147 & 0.0055 \\ 
						\bottomrule
						
						\multicolumn{1}{c}{} & \multicolumn{1}{c}{$\hat{\lambda}_0 / \lambda_0$} & \multicolumn{2}{c}{} & \multicolumn{2}{ c }{} & \multicolumn{1}{ c}{$0.3020$} & \multicolumn{1}{ c}{$0.3971$} & \multicolumn{1}{ c}{$0.4242$} & \multicolumn{1}{ c}{$0.8322$} & \multicolumn{1}{ c}{$0.3621$} & \multicolumn{1}{ c}{$0.5994$} \\
						\cmidrule(rl){7-7} \cmidrule(rl){8-8} \cmidrule(rl){9-9} \cmidrule(rl){10-10} \cmidrule(rl){11-11} \cmidrule(rl){12-12}
						& Mean & 0.0648  & 0.1847 & 0.1056 & 0.0852  & 0.1484 & 0.1371 & 0.1511 & 0.1189 & 0.1487 & 0.1250 \\
						$k = 1$     	 & Bias & -0.0404 & 0.0796 & 0.0005 & -0.0199 & 0.0434 & 0.0320 & 0.0460 & 0.0138 & 0.0436 & 0.0200 \\ 
						$\dcor = 0.105$  & Var  & 0.0104  & 0.0019 & 0.0034 & 0.0056  & 0.0036 & 0.0043 & 0.0023 & 0.0030 & 0.0030 & 0.0038 \\
						& MSE  & 0.0120  & 0.0082 & 0.0034 & 0.0060  & 0.0055 & 0.0054 & 0.0044 & 0.0032 & 0.0049 & 0.0042 \\ 
						\bottomrule
						
						\multicolumn{1}{c}{} & \multicolumn{1}{c}{$\hat{\lambda}_0 / \lambda_0$} & \multicolumn{2}{c}{} & \multicolumn{2}{ c }{} & \multicolumn{1}{ c}{$0.3755$} & \multicolumn{1}{ c}{$0.4697$} & \multicolumn{1}{ c}{$0.4150$} & \multicolumn{1}{ c}{$0.6292$} & \multicolumn{1}{ c}{$0.3962$} & \multicolumn{1}{ c}{$0.5562$}  \\
						\cmidrule(rl){7-7} \cmidrule(rl){8-8} \cmidrule(rl){9-9} \cmidrule(rl){10-10} \cmidrule(rl){11-11} \cmidrule(rl){12-12}
						& Mean & 0.1524  & 0.2260 & 0.1576  & 0.1550  & 0.1984 & 0.1914 & 0.1976 & 0.1830 & 0.1979 & 0.1865 \\
						$k = 2$ 		 & Bias & -0.0147 & 0.0590 & -0.0095 & -0.0121 & 0.0313 & 0.0244 & 0.0306 & 0.0159 & 0.0308 & 0.0195 \\ 
						$\dcor = 0.167$  & Var  & 0.0057  & 0.0020 & 0.0041  & 0.0047  & 0.0031 & 0.0035 & 0.0028 & 0.0032 & 0.0029 & 0.0034 \\
						& MSE  & 0.0059  & 0.0055 & 0.0042  & 0.0049  & 0.0041 & 0.0041 & 0.0037 & 0.0035 & 0.0039 & 0.0037 \\ 
						\bottomrule
						
						\multicolumn{1}{c}{} & \multicolumn{1}{c}{$\hat{\lambda}_0 / \lambda_0$} & \multicolumn{2}{c}{} & \multicolumn{2}{ c }{} & \multicolumn{1}{ c}{$0.0512$} & \multicolumn{1}{ c}{$0.6556$} & \multicolumn{1}{ c}{$0.3009$} & \multicolumn{1}{ c}{$0.6687$}  & \multicolumn{1}{ c}{$0.1655$}  & \multicolumn{1}{ c}{$0.6630$} \\
						\cmidrule(rl){7-7} \cmidrule(rl){8-8} \cmidrule(rl){9-9} \cmidrule(rl){10-10} \cmidrule(rl){11-11} \cmidrule(rl){12-12}
						& Mean & 0.2318  & 0.2806 & 0.2318  & 0.2318  & 0.2587 & 0.2486 & 0.2587 & 0.2480 & 0.2587 & 0.2483 \\
						$k = 4$ 		 & Bias & -0.0071 & 0.0417 & -0.0070 & -0.0071 & 0.0198 & 0.0097 & 0.0198 & 0.0091 & 0.0198 & 0.0094 \\
						$\dcor = 0.239$  & Var  & 0.0024  & 0.0015 & 0.0024  & 0.0024  & 0.0019 & 0.0021 & 0.0019 & 0.0021 & 0.0019 & 0.0021 \\
						& MSE  & 0.0025  & 0.0033 & 0.0024  & 0.0025  & 0.0023 & 0.0022 & 0.0023 & 0.0022 & 0.0023 & 0.0022 \\  
						\bottomrule
						
						\multicolumn{1}{c}{} & \multicolumn{1}{c}{$\hat{\lambda}_0 / \lambda_0$} & \multicolumn{2}{c}{} & \multicolumn{2}{ c }{} & \multicolumn{1}{ c}{$0.4672$}  & \multicolumn{1}{ c}{$0.7673$} & \multicolumn{1}{ c}{$0.4672$}  & \multicolumn{1}{ c}{$0.7673$} & \multicolumn{1}{ c}{$0.4672$}  & \multicolumn{1}{ c}{$0.7673$} \\
						\cmidrule(rl){7-7} \cmidrule(rl){8-8} \cmidrule(rl){9-9} \cmidrule(rl){10-10} \cmidrule(rl){11-11} \cmidrule(rl){12-12}
						& Mean & 0.3039  & 0.3408 & 0.3039  & 0.3039  & 0.3235 & 0.3124 & 0.3235 & 0.3124 & 0.3235 & 0.3124 \\
						$k = 8$ 		 & Bias & -0.0029 & 0.0341 & -0.0029 & -0.0029 & 0.0168 & 0.0057 & 0.0168 & 0.0057 & 0.0168 & 0.0057 \\
						$\dcor = 0.307$  & Var  & 0.0017  & 0.0012 & 0.0017  & 0.0017  & 0.0015 & 0.0016 & 0.0015 & 0.0016 & 0.0015 & 0.0016 \\
						& MSE  & 0.0017  & 0.0025 & 0.0017  & 0.0017  & 0.0018 & 0.0016 & 0.0018 & 0.0016 & 0.0018 & 0.0016 \\ 
						\bottomrule 
						
						\multicolumn{1}{c}{} & \multicolumn{1}{c}{$\hat{\lambda}_0 / \lambda_0$} & \multicolumn{2}{c}{} & \multicolumn{2}{ c }{} & \multicolumn{1}{ c}{$0.4681$} & \multicolumn{1}{ c}{$0.7673$} & \multicolumn{1}{ c}{$0.4681$} & \multicolumn{1}{ c}{$0.7673$} & \multicolumn{1}{ c}{$0.4681$} & \multicolumn{1}{ c}{$0.7673$} \\
						\cmidrule(rl){7-7} \cmidrule(rl){8-8} \cmidrule(rl){9-9} \cmidrule(rl){10-10} \cmidrule(rl){11-11} \cmidrule(rl){12-12}
						& Mean & 0.3576  & 0.3883 & 0.3576  & 0.3576  & 0.3739 & 0.3637 & 0.3739 & 0.3637 & 0.3739 & 0.3637 \\
						$k = 16$ 		 & Bias & -0.0024 & 0.0283 & -0.0024 & -0.0024 & 0.0140 & 0.0038 & 0.0140 & 0.0038 & 0.0140 & 0.0038 \\
						$\dcor = 0.360$  & Var  & 0.0012  & 0.0010 & 0.0012  & 0.0012  & 0.0011 & 0.0012 & 0.0011 & 0.0012 & 0.0011 & 0.0012 \\
						& MSE  & 0.0012  & 0.0018 & 0.0012  & 0.0012  & 0.0013 & 0.0012 & 0.0013 & 0.0012 & 0.0013 & 0.0012 \\  
						\bottomrule
						
						\multicolumn{1}{c}{} & \multicolumn{1}{c}{$\hat{\lambda}_0 / \lambda_0$} & \multicolumn{2}{c}{} & \multicolumn{2}{ c }{} & \multicolumn{1}{ c}{$0.4055$} & \multicolumn{1}{ c}{$0.8397$} & \multicolumn{1}{ c}{$0.4055$} & \multicolumn{1}{ c}{$0.8397$} & \multicolumn{1}{ c}{$0.4055$} & \multicolumn{1}{ c}{$0.8397$} \\
						\cmidrule(rl){7-7} \cmidrule(rl){8-8} \cmidrule(rl){9-9} \cmidrule(rl){10-10} \cmidrule(rl){11-11} \cmidrule(rl){12-12}
						& Mean & 0.4122  & 0.4345 & 0.4122  & 0.4122  & 0.4255 & 0.4158 & 0.4255 & 0.4158 & 0.4255 & 0.4158  \\
						$k \to \infty$  & Bias & -0.0012 & 0.0210 & -0.0012 & -0.0012 & 0.0120 & 0.0023 & 0.0120 & 0.0023 & 0.0120 & 0.0023 \\
						$\dcor = 0.414$ & Var  & 0.0008  & 0.0007 & 0.0008  & 0.0008  & 0.0008 & 0.0008 & 0.0008 & 0.0008 & 0.0008 & 0.0008 \\
						& MSE  & 0.0008  & 0.0012 & 0.0008  & 0.0008  & 0.0009 & 0.0008 & 0.0009 & 0.0008 & 0.0009 & 0.0008 \\ 
						\bottomrule 
					\end{tabular} 
				\end{adjustbox}
			\end{center}
			\caption{Results under the nonlinear model for different values of $k$ and $\dcor$. Specifically, ldCor denotes the combination $\text{dCor}_{\hat{\lambda}_0} = \hat{\lambda}_0$dCorU + $\left(1 - \hat{\lambda}_0\right)$dCorV with $\hat{\lambda}_0$ obtained through bootstrap. Similarly for dCorU(A) and dCorU(T). In the same way LdCor represents $\text{dCor}_{\lambda_0} = \lambda_0$dCorU + $\left(1 - \lambda_0\right)$dCorV,  with $\lambda_0$ approximated using Monte Carlo, similarly for dCorU(A) and dCorU(T) with $n = 100$.}
		\end{table}
		\newpage
		\begin{table}[htbp!] 
			\begin{center}
				\begin{adjustbox}{width=1\textwidth}
					\small
					\begin{tabular}{c l  c  c  c c }
						\multicolumn{2}{c}{ } & \multicolumn{2}{c}{$n = 1000$} & \multicolumn{2}{c}{$n = 10000$}\\
						\toprule
						\multicolumn{2}{c}{} & dCorU & dCorV & dCorU & dCorV   \\ 
						\hline
						& Mean & -0.0064169473 & 0.0479105949 & -0.0011028850 & 0.0155715700 \\
						$\theta = 0$ & Bias & -0.0064159473 & 0.0479105949 & -0.001102885  & 0.0155715700 \\ 
						$\dcor = 0$  & Var  & 0.0009284552  & 0.0001375106 & 0.0001047187  & 0.0000169218 \\
						& MSE  & 0.0009696196  & 0.0024329357 & 0.0001058408  & 0.0002593804 \\ 
						\bottomrule
						
						
						& Mean & 0.0718297971  & 0.0907165625 & 0.0781968300  & 0.0797945300  \\
						$\theta = 0.25$ & Bias & -0.0072271444 & 0.0116596210 & -0.0008601117 & 0.0007375874 \\
						$\dcor = 0.079$ & Var  & 0.0013098307  & 0.0006638111 & 0.0001020316  & 0.0000977523  \\
						& MSE  & 0.0013620623  & 0.0007997579 & 0.0001026795  & 0.0000982084  \\  
						\bottomrule
						
						& Mean & 0.1565371293  & 0.1643796570 & 0.1575798101  & 0.1583507120 \\
						$\theta = 0.5$  & Bias & -0.0015767537 & 0.0062657740 & -0.0005340999 & 0.0002367772 \\
						$\dcor = 0.158$ & Var  & 0.0008823790  & 0.0007892589 & 0.0000966199  & 0.0000955250 \\
						& MSE  & 0.0008848652  & 0.0008285188 & 0.0000968182  & 0.0000955811 \\  
						\bottomrule
						
						& Mean & 0.2367121489  & 0.2416617704 & 0.2367525010  & 0.2372453021 \\
						$\theta = 0.75$ & Bias & -0.0004586757 & 0.0044909459 & -0.0004183631 & 0.0000744719 \\
						$\dcor = 0.237$ & Var  & 0.0008060114  & 0.0007675352 & 0.0000898021  & 0.0000893724 \\
						& MSE  & 0.0008062218  & 0.0007877038 & 0.00008989635 & 0.0000892975 \\  
						\bottomrule
						
						& Mean & 0.3162854420 & 0.3197722210 & 0.3158737320  & 0.3162222222 \\
						$\theta = 1$    & Bias & 0.0000576462 & 0.0035444680 & -0.0003540643 & -0.0000055621 \\
						$\dcor = 0.316$ & Var  & 0.0007237377 & 0.0007034444 & 0.0000805072  & 0.0000802804 \\
						& MSE  & 0.0007237410 & 0.0007160077 & 0.0000805601  & 0.0000802081 \\ 
						\bottomrule
						
					\end{tabular} 
				\end{adjustbox}
			\end{center}
			\caption{Results under the FGM-Model for different values of $\theta$ and $\dcor$ with $n = 1000, 10000$.}
		\end{table}
		\newpage
		\begin{table}[htbp!] 
			\begin{center}
				\begin{adjustbox}{width=1\textwidth}
					\small
					\begin{tabular}{c l  c  c  c c }
						\multicolumn{2}{c}{ } & \multicolumn{2}{c}{$n = 1000$} & \multicolumn{2}{c}{$n = 10000$}\\
						\toprule
						\multicolumn{2}{c}{} & dCorU & dCorV & dCorU & dCorV   \\ 
						\hline
						& Mean & -0.0053249323 & 0.0549054026 & -0.0020787712 & 0.0172301836 \\
						$\rho = 0$  & Bias & -0.0053249323 & 0.0549054026 & -0.0020787710 & 0.0017230182  \\ 
						$\dcor = 0$ & Var  & 0.0009774654  & 0.0001230894 & 0.0000938972  & 0.0000119813  \\
						& MSE  & 0.0010058203  & 0.0031376926 & 0.0000981340  & 0.0003088498  \\ 
						\bottomrule
						
						
						& Mean & 0.2214106074  & 0.2282662084 & 0.2232344402  & 0.2239136213  \\
						$\rho = 0.25$   & Bias & -0.0022892763 & 0.0045663247 & -0.0004654878 & 0.0002137027   \\
						$\dcor = 0.224$ & Var  & 0.0007963065  & 0.0007410121 & 0.0000778305  & 0.00007722296  \\
						& MSE  & 0.0008015473  & 0.0007618635 & 0.0000779771  & 0.00007726863  \\  
						\bottomrule
						
						& Mean & 0.4529598924  & 0.4558117978 & 0.4538119229  & 0.4540965165 \\
						$\rho = 0.5$    & Bias & -0.0011666120 & 0.0016852934 & -0.0003146241 & -0.0000300164 \\
						$\dcor = 0.454$ & Var  & 0.0005685453  & 0.0005568162 & 0.0000562083  & 0.0000560411 \\
						& MSE  & 0.0005699063  & 0.0005596564 & 0.0000562567  & 0.0000560420 \\  
						\bottomrule
						
						& Mean & 0.7009988519  & 0.7022233490 &  0.7014272263 & 0.7015496102 \\
						$\rho = 0.75$   & Bias & -0.0006204699 & 0.0006040272 & -0.0001920802 & -0.0000697245 \\
						$\dcor = 0.702$ & Var  & 0.0002495016  & 0.0002464804 & 0.0000245757  & 0.0000245234 \\
						& MSE  & 0.0002498865  & 0.0002468452 & 0.0000245905  & 0.0000245283 \\  
						\bottomrule
						
						& Mean & 1.0000000000 & 1.0000000000 & 1.0000000000 & 1.0000000000 \\
						$\rho = 1$  & Bias & 0.0000000000 & 0.0000000000 & 0.0000000000 & 0.0000000000 \\
						$\dcor = 1$ & Var  & 0.0000000000 & 0.0000000000 & 0.0000000000 & 0.0000000000 \\
						& MSE  & 0.0000000000 & 0.0000000000 & 0.0000000000 & 0.0000000000 \\ 
						\bottomrule
						
					\end{tabular} 
				\end{adjustbox}
			\end{center}
			\caption{Results under the bivariate normal model for different values of $\rho$ and $\dcor$ with $n = 1000, 10000$.}
		\end{table}
		\newpage
		\begin{table}[htbp!] 
			\begin{center}
				\begin{adjustbox}{width=1\textwidth}
					\small
					\begin{tabular}{c l  c  c  c c }
						\multicolumn{2}{c}{ } & \multicolumn{2}{c}{$n = 1000$} & \multicolumn{2}{c}{$n = 10000$}\\
						\toprule
						\multicolumn{2}{c}{} & dCorU & dCorV & dCorU & dCorV   \\ 
						\hline
						& Mean & -0.0066496747 & 0.0478197476 & -0.0013400740 & 0.01548219120 \\
						$k = 0$     & Bias & -0.0066496747 & 0.0478197476 & -0.001340074  & 0.0154821901  \\ 
						$\dcor = 0$ & Var  & 0.0009092382  & 0.0001394757 & 0.0001027066  & 0.00001687763  \\
						& MSE  & 0.0009534563  & 0.0024262040 & 0.0001044099  & 0.0002565608  \\ 
						\bottomrule
						
						
						& Mean & 0.1656174161  & 0.1729610014 & 0.1670618410 & 0.1677996200  \\
						$k = 2$         & Bias & -0.0014285839 & 0.0059150014 & 0.0000157970 & 0.0007536394 \\ 
						$\dcor = 0.167$ & Var  & 0.0001989045  & 0.0001817651 & 0.0000156266 & 0.0000154994  \\
						& MSE  & 0.0002009454  & 0.0002167524 & 0.0000156268 & 0.0000160534  \\ 
						\bottomrule
						
						& Mean & 0.2382283115  & 0.2432708574 & 0.2388520100  & 0.2393590100 \\
						$k = 4$         & Bias & -0.0006506885 & 0.0043918574 & -0.0000269810 & 0.0004799890 \\ 
						$\dcor = 0.239$ & Var  & 0.0001398824  & 0.0001335897 & 0.0000132147  & 0.0000131658 \\
						& MSE  & 0.0001403058  & 0.0001528781 & 0.0000132154  & 0.0000133844 \\ 
						\bottomrule
						
						& Mean & 0.3062829011  & 0.3101043201 & 0.3067652000 & 0.3071486210 \\
						$k = 8$         & Bias & -0.0004511185 & 0.0033703360 & 0.0000311962 & 0.0004145721 \\
						$\dcor = 0.307$ & Var  & 0.0000974818  & 0.0000948144 & 0.0000096070 & 0.0000095890 \\
						& MSE  & 0.0000976853  & 0.0001061736 & 0.0000096080 & 0.0000097520 \\  
						\bottomrule
						
						& Mean & 0.3599418001  & 0.3630928010 & 0.3599973100 & 0.3603135011 \\
						$k = 16$        & Bias & -0.0000112450 & 0.0031398350 & 0.0000442949 & 0.0003605299 \\
						$\dcor = 0.360$ & Var  & 0.0000677156  & 0.0000663919 & 0.0000068125 & 0.0000067988 \\
						& MSE  & 0.0000677158  & 0.0000762504 & 0.0000068083 & 0.0000069227 \\ 
						\bottomrule
						
					\end{tabular} 
				\end{adjustbox}
			\end{center}
			\caption{Results under the nonlinear model for different values of $k$ and $\dcor$ with $n = 1000, 10000$.}
		\end{table}

\end{document}